\newenvironment{proof}{\paragraph{Proof:}}{\hfill$\square$}
\newtheorem{theorem}{Theorem}[section]
\newtheorem{lemma}{Lemma}[section]
\newtheorem{remark}[theorem]{Remark}
\numberwithin{equation}{section}
\numberwithin{lemma}{section}
\begin{document}
\title{An inverse medium problem using Stekloff eigenvalues and a Bayesian approach}
\author{Juan Liu$^1$, Yanfang Liu$^2$, and Jiguang Sun$^{2}$}
\date{}
\maketitle
\begin{abstract}
This paper studies the reconstruction of Stekloff eigenvalues and the index of refraction of an inhomogeneous medium from Cauchy data. The inverse spectrum problem to reconstruct Stekloff eigenvalues is investigated using a new integral equation for the reciprocity gap method. Given reconstructed eigenvalues, a Bayesian approach is proposed to estimate the index of refraction. Moreover, since it is impossible to know the multiplicities of the reconstructed eigenvalues and since the eigenvalues can be complex, we employ the recently developed spectral indicator method to compute Stekloff eigenvalues. Numerical experiments validate the effectiveness of the proposed methods.
\end{abstract}

{\bf Key words:} inverse medium problem,  inverse spectrum problem, Stekloff eigenvalues, reciprocity gap, Bayesian approach, spectral indicator method

\section{Introduction}
Inverse scattering problems for inhomogeneous media have many applications such as medical imaging and nondestructive testing.
In this paper, the inverse spectrum problem to reconstruct the Stekloff eigenvalues from Cauchy data is investigated first 
using a new integral equation for the reciprocity gap method.
Then these eigenvalues are used to estimate the index of refraction of the inhomogeneous medium.
Due to the lack of knowledge of the relation between Stekloff eigenvalues and the index of refraction,
we propose a Bayesian approach.
Since the eigenvalues are complex for absorbing media and the multiplicities are not known,
the recently developed spectral indicator method is employed to compute the Stekloff eigenvalues \cite{Huang2016JCP, Huang2017}.

The reconstruction of certain eigenvalues from the scattering data has been studied by many researchers.
In the context of qualitative methods in inverse scattering, it has been shown that interior eigenvalues such as Dirichlet eigenvalues and
transmission eigenvalues can be determined from the scattering data
\cite{CakoniColtonHaddar2010CRASP, Sun2011IP, LiuSun2014IP} (see also the special issue edited by Lechleiter and Sun \cite{LechleiterSun2017AA}).
A related method, which can be used to compute interior eigenvalues using the scattering data, is the inside-outside duality
\cite{KirschLechleiter2013IP, LechleiterPeters2015CMS, LechleiterRennoch2015, PeterKleefeld2016IP}.

Given reconstructed eigenvalues, a legitimate question is what information about the obstacle can be obtained.
For inhomogeneous non-absorbing media, transmission eigenvalues have been used to reconstruct the shape of the obstacle \cite{Sun2012IP} 
and obtain useful information of the index of refraction
\cite{CakoniColtonMonk2007IP, Sun2011IP, AudibertCakoniHaddar2017, HarrisCakoniSun2014IP, BondarenkoHarrisKleefeld2017AA, LiHuangLinWang2018IPI}.
However, the use of transmission eigenvalues has two drawbacks: 1) multi-frequency data are necessary; and
2) only real transmission eigenvalues can be determined from the scattering data so far. 

It has been shown that Stekloff eigenvalues associated with the scattering problem
can be determined from far field data of a single frequency \cite{CakoniColton2016, AudibertCakoniHaddar2017}.
Unlike transmission eigenvalues, Stekloff eigenvalues exist for absorbing media as well.
Hence the use of Stekloff eigenvalues avoids the above two drawbacks and has the potential to work for a wider class of problems.
In this paper, a new integral equation for the reciprocity gap (RG) method
\cite{ColtonHaddar2005IP, DiCristoSun2006IP, MonkSun2007IPI} is introduced to determine Stekloff eigenvalues from Cauchy data.
Then a Bayesian approach is proposed to estimate the index of refraction.
The Metropolis-Hastings (M-H) Algorithm is used to explore the posterior distribution. Numerical examples show that the proposed methods are effective.
We refer the readers to \cite{JariE.Somersalo2006, Stuart2010AN} and references therein on the Bayesian framework for inverse problems and \cite{BaussardEtal2001IP, GharsalliEtal2014IP, YangMaZheng2015, HarrisRome2017AA} for the Bayesian methods for some inverse scattering problems.

The rest of the paper is organized as follows. In Section 2, the forward scattering problem and the associated Stekloff eigenvalue problem are introduced.
In Section 3, a new integral equation for the reciprocity gap method is proposed to reconstruct Stekloff eigenvalues using Cauchy data.
In Section 4, a Bayesian approach and the MCMC method are proposed to estimate the index of refraction.
Finally, numerical examples are provided in Section 5.

\section{Scattering Problem and Stekloff Eigenvalues}
In this section, we introduce the direct scattering problem, the Stekloff eigenvalue problem,
and the inverse scattering problems using Cauchy data. Then a monotonicity of the largest negative Stekloff eigenvalue is proved.

Let $D$ be a bounded domain in $\mathbb{R}^2$ with boundary $\partial D$ of class $C^2$.
Let $k$ be the wavenumber and $n(x)$ be the index of refraction such that $n(x) \in L^\infty (\mathbb{R}^2)$.
Assume that $n(x)=1$ for $\mathbb{R}^2\setminus \overline{D}$ and $\Re(n(x))>0, \Im(n(x))\geq 0$, where $\Re(\cdot)$ and $\Im(\cdot)$ denote the real and imaginary parts, respectively.
The direct scattering problem is to find the total field $u$ such that
\begin{equation}\label{medium}
  \left\{
   \begin{array}{lll}
   &\Delta u+k^2n(x) u=0,\ \ \ \textrm{in}\  \mathbb{R}^2 \setminus \{x_0\}, \\
   &u=u^s+u^i,\\
   &\lim\limits_{{r}\rightarrow\infty}r^{\frac{1}{2}}(\partial u^s/\partial r-{i} k u^s)=0,\ \ \ r=|x|,
   \end{array}\right.
\end{equation}
where 
\[
u^i:=\Phi(\cdot, x_0) \quad x_0 \in \mathbb{R} \setminus \overline{D}
\]
is the incident wave generated by a point source. Here $\Phi$ is the fundamental solution of the Helmholtz equation.

The associated Stekloff eigenvalue problem is defined as follows \cite{CakoniColton2016}.
Find $\lambda \in \mathbb C$ and a non-trivial function $w$ such that
\begin{equation}\label{stekloff}
  \left\{
   \begin{array}{ll}
   &\Delta w+k^2n(x) w=0,\ \ \ \textrm{in}\  B, \\
   &\partial w/\partial \nu+\lambda w=0,\ \ \ \textrm{on}\ \Gamma,
   \end{array}\right.
\end{equation}
where $B$ be a bounded domain in $\mathbb R^2$ and $\Gamma:=\partial B$ such that $D \subset B$.

\begin{figure}[t]
\centering
\includegraphics[width=0.35\textwidth]{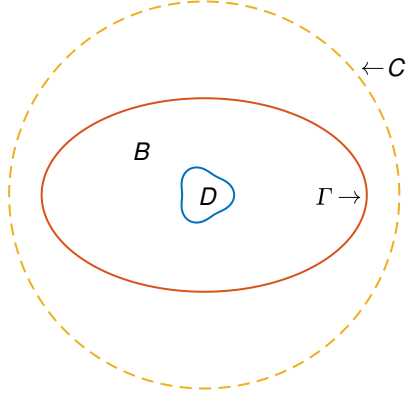}
\caption{\label{fig0}Explicative picture for the scattering problem.}
\end{figure}

Assume that the Cauchy data $u$ and $\partial_\nu u:=\partial u/\partial \nu$ are known on $\Gamma$
for each the incident wave $u^i:=\Phi(\cdot, x_0), x_0\in C$, where $C$ is a simple closed curve containing $B$ (see \ref{fig0}). 
The inverse scattering problems considered in this paper are:
\begin{itemize}
\item[{\bf IP1}] Reconstruct Stekloff eigenvalues from Cauchy data;
\item[{\bf IP2}] Estimate the index of refraction $n(x)$ using Stekloff eigenvalues.
\end{itemize}

The weak formulation for the Stekloff eigenvalue problem \eqref{stekloff} is to find $(\lambda,u)\in \mathbb{C}\times H^1(B)$ such that
\begin{equation}\label{variation}
\left (\nabla w, \nabla v\right)-k^2 \left(nw,v\right)=-\lambda \left \langle w,v\right\rangle\quad \forall \, v\in H^1(B),
\end{equation}
where $\left(f,g\right)=\int_B f\overline{g}d x\ \ \textrm{and}\ \ \left\langle f,g\right\rangle=\int_\Gamma f \overline{g} d s$.

When $n(x)$ is real, all Stekloff eigenvalues are real and they form an infinite discrete set \cite{CakoniColton2016}.
We call $k^2$ a modified Dirichlet eigenvalue of $B$ if there exists a nontrivial $u\in H^1(B)$ such that
\begin{equation}\label{Dirichlet}
  \left\{
   \begin{array}{lll}
   &\Delta u+k^2n u=0,\ \ \ \textrm{in}\  B, \\
   &u=0,\ \ \ \textrm{on}\  \Gamma.
   \end{array}\right.
\end{equation}
\begin{remark}
Note that a standard Dirichlet eigenvalue problem is such that $n(x) \equiv 1$ in \eqref{Dirichlet}. For simplicity, in the rest of the paper, we call $k^2$ in \eqref{Dirichlet}
a Dirichlet eigenvalue.
\end{remark}
It is shown in \cite{AudibertCakoniHaddar2017} that Stekloff eigenvalues accumulate at $-\infty$ if $k^2$ is not a Dirichlet eigenvalue. 
Next, we prove a property of the largest negative Stekloff eigenvalue $\lambda^-_{1}$ when $n(x)$ is given by
\begin{equation}\label{piecewise}
n(x):=n_c=\left\{
\begin{array}{lll}
&1,\ \ \ \textrm{in}\  B\setminus \overline{D}, \\
&c,\ \ \ \textrm{in}\  D.
\end{array}\right.
\end{equation}
Suppose $n_c$ is perturbed by 
\begin{equation*}
\delta n_c:=\left\{
\begin{array}{lll}
&0,\ \ \ \textrm{in}\  B\setminus \overline{D}, \\
&\delta c,\ \ \ \textrm{in}\  D,
\end{array}\right.
\end{equation*}
where $\delta c$ is also a real constant. The perturbation $\delta n_c$ leads to $\delta w$ and $\delta \lambda^-_1$ of the eigenpair.
From \eqref{variation}, $\delta w\in H^1(B)$ and $\delta \lambda^-_1$ satisfies
\begin{equation*}
\big (\nabla (w+\delta w), \nabla v\big)-k^2 \big((n_c+\delta n_c)(w+\delta w),v\big)=-(\lambda^-_{1}+\delta \lambda^-_{1})\big\langle w+\delta w,v\big\rangle\quad\forall \,v\in H^1(B).
\end{equation*}
Using the fact that $(w,\lambda^-_{1})$ is a real eigenpair, we have that
\begin{eqnarray*}
&&\big (\nabla \delta w, \nabla v\big)-k^2 \big(\delta n_c (w+\delta w),v\big)-k^2\big(n_c\delta w,v\big)\\
&&\qquad\qquad\qquad\qquad =-\delta \lambda^-_{1}\big\langle w+\delta w,v\big\rangle-\lambda^-_{1} \big\langle \delta w,v\big\rangle \quad \forall \,v\in H^1(B).
\end{eqnarray*}
Letting $v=w$ and noting that $n_c$ is real, we have that
\begin{equation*}
k^2\big (\delta n_c (w+\delta w), w\big)=\delta \lambda^-_{1}\langle w+\delta w,w\rangle,
\end{equation*}
which implies that
\begin{eqnarray}\label{disturb}
\delta\lambda^-_{1}&=&\frac{k^2\big (\delta n_c (w+\delta w),  w\big)}{\langle w+\delta w,w\rangle}\nonumber\\
&=& \frac{k^2\big (\delta n_c w,  w\big)+k^2\big (\delta n_c \delta w,  w\big)}{\langle w,w\rangle+\langle \delta w,w\rangle}\nonumber\\
&=& \frac{k^2\delta c\big ( w,  w\big)_D+k^2\delta c\big ( \delta w,  w\big)_D}{\langle w,w\rangle+\langle \delta w,w\rangle},
\end{eqnarray}
where $\left(f,g\right)_D=\int_D f\overline{g}d x$. If $\delta c> 0$ is small enough, one has that
\begin{equation}\label{dndww}
|(\delta w,  w)_D|<\frac{1}{2}(w,  w)_D\quad \text{and} \quad |\langle \delta w,w\rangle |<\frac{1}{2}\langle w,w\rangle.
\end{equation}
From \eqref{disturb} and \eqref{dndww}, we have
\begin{equation}\label{relation}
0 < \frac{k^2\delta c\big(w,w\big)_D}{3\langle w,w \rangle} \leq \delta \lambda^-_{1} \leq \frac{3k^2\delta c\big(w,w\big)_D}{\langle w,w \rangle}.
\end{equation}
This implies that $\lambda^-_{1}$ is monotonically increasing with respect to $n_c$. 
This breaks until $k^2$ becomes a (modified) Neumann eigenvalue, i.e., there exists a non-trivial $u$ such that
\begin{equation}\label{Neumann}
  \left\{
   \begin{array}{lll}
   &\Delta u+k^2n_c u=0,\ \ \ \textrm{in}\  B, \\
   &\frac{\partial u}{\partial \nu}=0,\ \ \ \textrm{on}\  \Gamma.
   \end{array}\right.
\end{equation}
Note that a standard eigenvalues is $k^2$ satisfying \eqref{Neumann} for $n_c \equiv 1$. Again, we $k^2$ a Neumann eigenvalue for simplicity.
Excluding this case, we actually proved the following theorem.

\begin{theorem}\label{ThmNS}
Let the index of refraction be defined in \eqref{piecewise} and $[a,b]$ be an interval that $k^2$ is not a Neumann eigenvalue of \eqref{Neumann} for any $c\in [a,b]$. 
Then the largest negative Stekloff eigenvalue $\lambda^-_{1}$ is monotonically increasing on $[a,b]$.
\end{theorem}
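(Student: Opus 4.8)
The plan is to turn the formal perturbation computation \eqref{disturb}--\eqref{relation} into a rigorous continuation argument along $[a,b]$. For $c\in[a,b]$ write $\lambda^-_1(c)$ for the largest negative Stekloff eigenvalue of \eqref{stekloff} with $n=n_c$, and let $w$ be an associated (real) eigenfunction. I would first record two non-degeneracy facts. If $w\equiv 0$ on $\Gamma$, the boundary condition in \eqref{stekloff} forces $\partial w/\partial\nu\equiv 0$ on $\Gamma$ as well, so $w$ has zero Cauchy data on $\Gamma$ and unique continuation for the Helmholtz equation gives $w\equiv 0$, a contradiction; hence $\langle w,w\rangle=\|w\|^2_{L^2(\Gamma)}>0$, so the denominator in \eqref{disturb} is nonzero. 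Similarly, if $(w,w)_D=\|w\|^2_{L^2(D)}=0$ then $w$ solves $\Delta w+k^2w=0$ in $B\setminus\overline D$ with zero Cauchy data on $\partial D$, whence $w\equiv 0$ in $B\setminus\overline D$ and in particular $w|_\Gamma=0$, again impossible; so $(w,w)_D>0$ and the numerator in \eqref{disturb} is strictly positive. Finally, the assumption that $k^2$ is not a Neumann eigenvalue of \eqref{Neumann} for any $c\in[a,b]$ is precisely the statement that $0$ is never a Stekloff eigenvalue along $[a,b]$, so $\lambda^-_1(c)<0$ with a gap to $0$ throughout $[a,b]$ (that $\lambda^-_1(c)$ exists for every $c$ is the accumulation at $-\infty$ recalled above, valid when $k^2$ is not a Dirichlet eigenvalue).

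The heart of the matter is local strict monotonicity at an arbitrary $c_0\in[a,b]$. The sesquilinear form on the left of \eqref{variation} depends affinely on $c$, only the term $-k^2(n_c w,v)$ moving, with $c$-derivative $-k^2(w,v)_D$, so analytic perturbation theory applies: the Stekloff eigenvalues near $\lambda^-_1(c_0)$ are represented by finitely many real-analytic branches $c\mapsto\lambda(c)$ with analytic eigenfunctions. Differentiating \eqref{variation} along such a branch, testing with the branch eigenfunction and using that it already solves \eqref{variation}, reproduces \eqref{disturb} in the limit: the branch derivative equals $k^2\|w\|^2_{L^2(D)}\big/\|w\|^2_{L^2(\Gamma)}>0$ for a simple branch, while for an eigenvalue of multiplicity $m$ the branch derivatives are the eigenvalues of the $m\times m$ matrix $\big(k^2(w_i,w_j)_D\big)$ relative to $\big(\langle w_i,w_j\rangle\big)$, which is positive definite since $\sum_i\xi_iw_i$ cannot vanish on $D$ for $\xi\neq 0$ (same unique-continuation argument). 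In all cases the branch derivatives are strictly positive, and since for $c$ near $c_0$ the branches issuing from $\lambda^-_1(c_0)$ stay negative and remain the largest negative Stekloff eigenvalue, $\lambda^-_1(\cdot)$ is continuous at $c_0$ and strictly increasing near $c_0$. Equivalently, one can keep the finite-difference form \eqref{disturb}--\eqref{relation}: continuity of the discrete spectrum gives $\|\delta w\|_{H^1(B)}\to 0$ as $\delta c\to 0$, so \eqref{dndww} holds for $|\delta c|$ small, and \eqref{relation} (together with its mirror image for $\delta c<0$) forces $\delta\lambda^-_1$ to have the sign of $\delta c$.

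Since $c_0$ was arbitrary, $\lambda^-_1$ is continuous on $[a,b]$ and locally strictly increasing at every point, hence strictly increasing on $[a,b]$, as claimed. The step I expect to require the most care is the bookkeeping at a multiple eigenvalue: one must check that the label ``largest negative Stekloff eigenvalue'' follows an analytic branch with positive derivative and does not jump between branches. This is exactly what the positive-definiteness of the Gram matrix $\big((w_i,w_j)_D\big)$ (no branch has zero derivative) and the no-Neumann-eigenvalue hypothesis (nothing migrates across $0$) secure; without the latter the monotonicity genuinely breaks as a Stekloff eigenvalue passes through $0$, which is why the interval $[a,b]$ must avoid Neumann eigenvalues.
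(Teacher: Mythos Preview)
Your argument is the paper's own: the proof given there \emph{is} the formal perturbation computation \eqref{disturb}--\eqref{relation} preceding the theorem statement, capped by the remark that monotonicity breaks only when $k^2$ becomes a Neumann eigenvalue. You add the rigor the paper leaves implicit---non-vanishing of $\langle w,w\rangle$ and $(w,w)_D$ via unique continuation, analytic dependence of eigenbranches on $c$, and the multiple-eigenvalue bookkeeping---so your version is strictly more complete. One small simplification: to obtain $(w,w)_D>0$ you can invoke unique continuation directly for $\Delta w+k^2n_cw=0$ in all of $B$ (a solution vanishing on the open set $D$ vanishes on $B$), rather than routing through Cauchy data on $\partial D$.
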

Assume that the largest negative Stekloff eigenvalue $\lambda^-_n$ is obtained.
If the shape of $D$ is known,
$\lambda^-_1$ uniquely determines $n_c$ on some suitable interval $[a, b]$ by \ref{ThmNS}.
However, it is not true on $\mathbb R$ as it is known that different $n_c$'s can give the same $\lambda^-_1$.

\section{Reconstruction of Stekloff Eigenvalues}\label{RGA}
Now we consider {\bf IP1} to reconstruct Stekloff eigenvalues from Cauchy data. 
The main ingredient is the reciprocity gap method using Cauchy data \cite{ColtonHaddar2005IP, DiCristoSun2006IP, MonkSun2007IPI}.
Assume that $u$ and $\partial_\nu u:=\partial u/\partial \nu$ are known on $\Gamma$
for each point source incident wave $u^i:=\Phi(\cdot, x_0), x_0\in C$ (see \ref{fig0}).
The following auxiliary scattering problem will be useful in the subsequent analysis.
Find $
u_\lambda(\cdot,x_0):=u_\lambda^s(\cdot,x_0)+\Phi(\cdot,x_0)
$
such that
\begin{equation}\label{auxiliary}
  \left\{
   \begin{array}{lll}
   &\Delta u_\lambda+k^2 u_\lambda=0,\ \ \ \textrm{in}\  \mathbb{R}^2\setminus \{\overline{B} \cup \{x_0\} \}, \\
   &\partial_\nu u_\lambda+\lambda u_\lambda=0, \ \ \ \textrm{on}\  \Gamma,\\
   &\lim\limits_{{r}\rightarrow\infty}r^{\frac{1}{2}}(\partial u_\lambda^s/\partial r-{i} k u_\lambda^s)=0,\ \ \ r=|x|,
   \end{array}\right.
\end{equation}
where $\nu$ is the unit outward normal to $\Gamma$ and $\lambda$ is a constant such that $\Im(\lambda)\geq 0$.
It is shown in \cite{CakoniColton2016} that \eqref{auxiliary} has a unique solution.

Denote by $U$ and $U_\lambda$ the sets of solutions $u(x, x_0)$ to \eqref{medium} and $u_\lambda(x,x_0)$ to \eqref{auxiliary}
for the incident wave $\Phi(\cdot, x_0), x_0\in C$, respectively.
Define the reciprocity gap functional by
\begin{equation}
R(v_1,v_2)=\int_\Gamma (v_1\partial_\nu v_2-v_2\partial_\nu v_1)ds,
\end{equation}
where $v_1$ and $v_2$ are solutions of the Helmholtz equation.
Let $\mathbb{S}:=\{d\in \mathbb{R}^2;|d|=1\}$ and consider the integral equation of finding $g\in L^2(\mathbb{S})$ to
\begin{equation}\label{IE}
R(u_\lambda(\cdot,x_0)-u(\cdot,x_0),v_g(\cdot))=R(u_\lambda(\cdot,x_0),\Phi_z(\cdot)) \quad \forall x_0\in C,
\end{equation}
where $\Phi_z(\cdot):=\Phi(\cdot,z)$ for some $z\in B$ and $v_g$ is the Herglotz wave function defined by
\[
v_g(x):=\int_{\mathbb{S}}{e}^{{i}kx\cdot d}g(d)ds(d).
\]

\begin{lemma}\label{lemma1}
If $\int_\Gamma u_\lambda(x,x_0)f(x)ds(x)=0$ for all $u_\lambda\in U_\lambda$, then $f(x)=0$ on $\Gamma$.
\end{lemma}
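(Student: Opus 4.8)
The statement is a completeness (density) assertion: it says the traces on $\Gamma$ of the family $\{u_\lambda(\cdot,x_0):x_0\in C\}$ have trivial annihilator for the bilinear pairing $(f,g)\mapsto\int_\Gamma fg\,ds$. The plan is to prove it by duality against a well-posed auxiliary exterior problem. Suppose $f$ on $\Gamma$ satisfies $\int_\Gamma u_\lambda(x,x_0)f(x)\,ds(x)=0$ for every $x_0\in C$, and let $v$ be the radiating solution of
\[
\Delta v+k^2v=0\ \ \text{in}\ \mathbb R^2\setminus\overline B,\qquad \partial_\nu v+\lambda v=f\ \ \text{on}\ \Gamma,
\]
together with the Sommerfeld radiation condition. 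This exterior impedance problem is of the same type as the one solved by the scattered field $u_\lambda^s$ in \eqref{auxiliary}, now with general data $f$; its unique solvability for $\Im(\lambda)\ge0$ and data in the relevant trace space on the $C^2$ curve $\Gamma$ is established in \cite{CakoniColton2016}, so $v$ exists and is unique.

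Next I would identify $\int_\Gamma u_\lambda(\cdot,x_0)f\,ds$ with $v(x_0)$. Fix $x_0\in C\subset\mathbb R^2\setminus\overline B$. Green's second identity in $\mathbb R^2\setminus\overline B$ applied to the pair $\big(v,u_\lambda^s(\cdot,x_0)\big)$, both of which are radiating so that the integral over the circle at infinity vanishes by the radiation condition, gives $\int_\Gamma\big(u_\lambda^s\,\partial_\nu v-v\,\partial_\nu u_\lambda^s\big)\,ds=0$. The Green representation formula in the same exterior domain applied to $v$, whose representing kernel $\Phi(\cdot,x_0)$ has its pole $x_0$ inside the domain, gives $v(x_0)=\int_\Gamma\big(\Phi(\cdot,x_0)\,\partial_\nu v-v\,\partial_\nu\Phi(\cdot,x_0)\big)\,ds$. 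Adding these identities and using $u_\lambda=u_\lambda^s+\Phi(\cdot,x_0)$ yields
\[
\int_\Gamma\big(u_\lambda\,\partial_\nu v-v\,\partial_\nu u_\lambda\big)\,ds=v(x_0).
\]
Substituting the boundary conditions $\partial_\nu u_\lambda=-\lambda u_\lambda$ (from \eqref{auxiliary}) and $\partial_\nu v=f-\lambda v$ on $\Gamma$, the two $\lambda$-terms cancel and we are left with $\int_\Gamma u_\lambda(\cdot,x_0)\,f\,ds=v(x_0)$. By hypothesis the left-hand side is zero, hence $v(x_0)=0$ for all $x_0\in C$.

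Finally I would conclude $f=0$ by uniqueness and unique continuation. Since $C$ is a simple closed curve enclosing $B$, the function $v$ is a radiating solution of the Helmholtz equation in the unbounded component of $\mathbb R^2\setminus C$ with vanishing Dirichlet trace on $C$; uniqueness for the exterior Dirichlet problem (Rellich's lemma together with unique continuation) forces $v\equiv0$ there. As $\mathbb R^2\setminus\overline B$ is connected, the unique continuation principle for the Helmholtz equation propagates $v\equiv0$ to all of $\mathbb R^2\setminus\overline B$. Taking traces on $\Gamma$ gives $v=\partial_\nu v=0$, so $f=\partial_\nu v+\lambda v=0$ on $\Gamma$.

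The step that will need the most care is the handling of the auxiliary impedance problem for $v$: one must ensure it is well posed for the given sign of $\lambda$ with $\Im(\lambda)\ge0$ and for boundary data only as regular as $f$, so that $v$ lies in a space in which the Neumann trace and the Green identities above are legitimate; one must also keep the orientation of the normal $\nu$ and the sign of the singular contribution in the representation formula consistent. Once this is in place, the Green identities, Rellich's lemma, and unique continuation are standard.
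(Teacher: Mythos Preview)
Your proof is correct and essentially identical to the paper's: both introduce the radiating exterior impedance solution with boundary data $f$ (the paper calls it $\tilde u^s$), combine Green's representation for that function with Green's second identity for the pair of radiating solutions, use the impedance boundary conditions to identify $\pm v(x_0)$ with $\int_\Gamma u_\lambda f\,ds$, and conclude $v\equiv0$ in $\mathbb R^2\setminus\overline B$ via unique continuation, whence $f=0$ from the trace. The only cosmetic difference is that your exterior representation formula carries the opposite sign to the paper's (and to the standard Colton--Kress convention with $\nu$ outward to $B$), a point you already flagged as needing care; since the conclusion is $v(x_0)=0$, this does not affect the argument.
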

\begin{proof}
Assume that $f(x)$ satisfies $\int_\Gamma u_\lambda(x,x_0)f(x)ds(x)=0$ for all $x_0\in C$. Let $\tilde{u}^s$ be the solution of the following problem
\begin{equation}
  \left\{
   \begin{array}{lll}
   &\Delta \tilde{u}^s+k^2\tilde{u}^s=0,\ \ \ \textrm{in}\  \mathbb{R}^2\setminus \overline{B}, \\
   &\partial_\nu \tilde{u}^s+\lambda \tilde{u}^s=f,\ \ \ \textrm{on } \Gamma,\\
   &\lim\limits_{{r}\rightarrow\infty}r^{\frac{1}{2}}(\partial \tilde{u}^s/\partial r-{i} k \tilde{u}^s)=0,\ \ \ r=|x|.
   \end{array}\right.
\end{equation}
Using Green's representation theorem \cite{ColtonKress2013}, Green's second theorem 
and the boundary condition $\partial_\nu u_\lambda+\lambda u_\lambda=0$ on $\Gamma$ for all $x_0\in C$, we have that
\begin{eqnarray}
\tilde{u}^s(x_0)&=&\int_\Gamma \partial_\nu\Phi(x,x_0) \tilde{u}^s(x)-\Phi(x,x_0)\partial_\nu \tilde{u}^s(x) d s\nonumber\\
&=&\int_\Gamma \partial_\nu\Phi(x,x_0) \tilde{u}^s(x)-\Phi(x,x_0)\partial_\nu \tilde{u}^s(x) d s\nonumber\\
&&\qquad +\int_\Gamma\partial_\nu u_\lambda^s(x) \tilde{u}^s(x)-u_\lambda^s(x)\partial_\nu \tilde{u}^s(x) d s\nonumber\\
&=&-\int_\Gamma u_\lambda(x,x_0)\partial_\nu \tilde{u}^s(x)-\tilde{u}^s(x) \partial_\nu u_\lambda(x,x_0) d s \nonumber\\
&=&-\int_\Gamma u_\lambda(x,x_0)(\partial_\nu \tilde{u}^s(x)+\lambda \tilde{u}^s(x)) d s \nonumber\\
&=&-\int_\Gamma u_\lambda(x,x_0) f(x) d s \nonumber\\
&=&0.
\end{eqnarray}
The unique continuity principle implies that $\tilde{u}^s(x)=0$ in $\mathbb{R}^2\setminus \bar{B}$. By the trace theorem we have $f=0$ on $\Gamma$.
\end{proof}

\begin{theorem}\label{theorem1}
If $\lambda$ is not a Stekloff eigenvalue of \eqref{stekloff}, then for $u\in U$ and $u_\lambda\in U_\lambda$, 
the operator $\mathcal{R}: L^2(\mathbb{S})\rightarrow L^2(C)$ defined by
\begin{equation*}
\mathcal{R}(g):=R\big(u_\lambda(\cdot,x_0)-u(\cdot,x_0),v_g(\cdot)\big), \quad x_0\in C
\end{equation*}
is injective.
\end{theorem}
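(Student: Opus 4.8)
The plan is to show that $\mathcal R(g)=0$ forces $g=0$. Assume $R\big(u_\lambda(\cdot,x_0)-u(\cdot,x_0),v_g\big)=0$ for every $x_0\in C$. Since $u=u^s+\Phi(\cdot,x_0)$ and $u_\lambda=u_\lambda^s+\Phi(\cdot,x_0)$ carry the same incident field, on $\Gamma$ one has $u_\lambda-u=u_\lambda^s-u^s=:w(\cdot,x_0)$, and because $x_0\notin\overline B$ this difference extends to a radiating solution of the Helmholtz equation in $\mathbb R^2\setminus\overline B$ (the point source cancels). Hence $\mathcal R(g)=R(w(\cdot,x_0),v_g)=\int_\Gamma\big(w\,\partial_\nu v_g-v_g\,\partial_\nu w\big)ds$ is the reciprocity-gap pairing of a function that is radiating outside $B$ with the entire Helmholtz solution $v_g$.

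First I would convert this boundary relation into a relation between scattered fields on $C$. Applying Green's second identity to $w(\cdot,x_0)$ and $v_g$ in the region $\{x\in\mathbb R^2\setminus\overline B:\ |x|<R\}$ and letting $R\to\infty$, the outgoing--outgoing contribution of the boundary term on $\{|x|=R\}$ cancels by virtue of the radiation condition, while the remaining term pairs the far-field pattern of $w(\cdot,x_0)$ with the incoming component of the Herglotz function; this yields $\mathcal R(g)=c\int_{\mathbb S}w^\infty(\hat x,x_0)\,g(-\hat x)\,ds(\hat x)$ with a fixed nonzero constant $c$, where $w^\infty(\cdot,x_0)=u_\lambda^\infty(\cdot,x_0)-u^\infty(\cdot,x_0)$ is the difference of the far-field patterns of the scattered fields $u_\lambda^s(\cdot,x_0)$ of \eqref{auxiliary} and $u^s(\cdot,x_0)$ of \eqref{medium}. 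By the mixed reciprocity relation, valid with one and the same dimensional constant $\gamma$ for the exterior impedance problem \eqref{auxiliary} and for the medium problem \eqref{medium}, $u_\lambda^\infty(\hat x,x_0)=\gamma\,v_\lambda^s(x_0,-\hat x)$ and $u^\infty(\hat x,x_0)=\gamma\,v^s(x_0,-\hat x)$, where $v_\lambda^s(\cdot,d)$ and $v^s(\cdot,d)$ are the corresponding scattered fields for the incident plane wave in direction $d$. Superposing over $d$ with density $g$ and substituting $d=-\hat x$ gives
\[
\mathcal R(g)=c\gamma\,\bigl(w_g^s-u_g^s\bigr)(x_0),\qquad x_0\in C,
\]
where $w_g^s$ (resp.\ $u_g^s$) is the scattered field for the incident Herglotz wave $v_g$ in \eqref{auxiliary} (resp.\ in \eqref{medium}). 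Hence $\mathcal R(g)=0$ implies $(w_g-u_g)(x_0)=0$ for all $x_0\in C$, with $w_g:=w_g^s+v_g$ and $u_g:=u_g^s+v_g$; note that $w_g-u_g=w_g^s-u_g^s$ is a radiating solution of the Helmholtz equation in $\mathbb R^2\setminus\overline B$.

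Next I would identify $u_g|_B$ as a Stekloff eigenfunction. Since $w_g-u_g$ is radiating and vanishes on the closed curve $C$ enclosing $B$, uniqueness for the exterior Dirichlet problem gives $w_g=u_g$ in the unbounded component of $\mathbb R^2\setminus C$, and unique continuation propagates this to all of $\mathbb R^2\setminus\overline B$. Taking traces on $\Gamma$ and using the boundary condition $\partial_\nu w_g+\lambda w_g=0$ from \eqref{auxiliary}, we obtain $\partial_\nu u_g+\lambda u_g=0$ on $\Gamma$; since $u_g$ solves $\Delta u_g+k^2 n u_g=0$ in $B$, the restriction $u_g|_B$ solves \eqref{stekloff}. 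As $\lambda$ is assumed not to be a Stekloff eigenvalue, $u_g\equiv 0$ in $B$, so $u_g^s=-v_g$ in $B$; because $n\equiv1$ in $B\setminus\overline D$ (a nonempty open set), unique continuation gives $u_g^s=-v_g$ throughout $\mathbb R^2\setminus\overline B$, hence everywhere. Thus $u_g^s$ is a radiating solution of the Helmholtz equation in the entire plane, which forces $u_g^s\equiv0$; therefore $v_g\equiv 0$, and injectivity of the Herglotz operator $g\mapsto v_g$ yields $g=0$.

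The main obstacle is the reduction in the second paragraph, i.e.\ rewriting the reciprocity-gap pairing $R(w(\cdot,x_0),v_g)$ as the point values $(w_g-u_g)(x_0)$ on $C$. This rests on the asymptotic analysis of the boundary integral on $\{|x|=R\}$ --- separating the outgoing and incoming parts of the Herglotz function and checking that the constant arising there is nonzero --- together with the mixed reciprocity identities for the two scattering problems. Everything after that reduction is routine unique-continuation bookkeeping.
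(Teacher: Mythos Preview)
Your argument is correct, but it takes a substantially different route from the paper. The paper works entirely on $\Gamma$: it introduces $w^s$ solving \eqref{eq2.7} (which is precisely your $u_g^s$), manipulates $R(u_\lambda-u,v_g)$ with two applications of Green's second theorem to obtain $\int_\Gamma u_\lambda\big(\partial_\nu(w^s+v_g)+\lambda(w^s+v_g)\big)\,ds=0$, and then invokes Lemma~\ref{lemma1} (the completeness of $\{u_\lambda(\cdot,x_0):x_0\in C\}$ on $\Gamma$) to deduce the impedance condition $\partial_\nu(w^s+v_g)+\lambda(w^s+v_g)=0$ directly. From that point on the two proofs coincide: your $u_g=u_g^s+v_g$ is exactly the paper's $w^s+v_g$, and both finish by noting this function satisfies \eqref{stekloff}, hence vanishes in $B$, whence $v_g=-w^s$ is radiating and therefore zero.

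What you do differently is replace the density lemma by a far-field/mixed-reciprocity detour: you push the $\Gamma$-integral out to $\{|x|=R\}$, pair the outgoing field $u_\lambda^s-u^s$ with the incoming component of $v_g$ to extract far-field patterns, and then use mixed reciprocity for both scattering problems to rewrite $\mathcal R(g)(x_0)$ as a constant multiple of $(w_g^s-u_g^s)(x_0)$. This buys you independence from Lemma~\ref{lemma1}, at the cost of more machinery (the stationary-phase asymptotics of $v_g$ and the two mixed-reciprocity identities, which you rightly flag as the main burden). The paper's approach is shorter and stays within boundary-integral calculus on $\Gamma$; yours is more ``scattering-theoretic'' and makes explicit the physical content of the reciprocity gap as a difference of scattered fields on $C$. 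Both are valid; the paper's version is the more economical one here because Lemma~\ref{lemma1} is already available and is needed anyway for Theorem~\ref{theorem2}.
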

\begin{proof}
Let $g$ satisfy $R(u_\lambda(\cdot,x_0)-u(\cdot,x_0),v_g(\cdot))=0$ for all $x_0\in C$. If $g\neq 0$, let $w^s$ solve
\begin{equation}\label{eq2.7}
  \left\{
   \begin{array}{ll}
   &\Delta w^s+k^2 n w^s =k^2 (1-n)v_g,\ \ \ \textrm{in}\  \mathbb{R}^2, \\
   &\lim\limits_{{r}\rightarrow\infty}r^{\frac{1}{2}}(\partial w^s/\partial \nu-{i} kw^s)=0,\ \ \ r=|x|.
   \end{array}\right.
\end{equation}
Using the boundary condition $\partial_\nu u_\lambda+\lambda u_\lambda=0$ on $\Gamma$ and Green's second theorem twice, the following holds
\begin{eqnarray}\label{eq2.8}
&&\int_\Gamma u_\lambda \big(\partial_\nu (w^s+v_g)+\lambda (w^s+v_g)\big) ds\nonumber\\
&=&\int_\Gamma \big[u_\lambda\partial_\nu(w^s+v_g)-(w^s+v_g)\partial_\nu u_\lambda\big]d s\nonumber\\
&=&\int_\Gamma \big[u_\lambda\partial_\nu(w^s+v_g)-(w^s+v_g)\partial_\nu u_\lambda\big] d s-\int_\Gamma \big[u\partial_\nu(w^s+v_g)-(w^s+v_g)\partial_\nu u \big]d s\nonumber\\
&=&\int_\Gamma\big[ (u_\lambda-u)\partial_\nu(w^s+v_g)-(w^s+v_g)\partial_\nu (u_\lambda-u) \big]d s\nonumber\\
&=&\int_\Gamma\big[ (u_\lambda-u)\partial_\nu v_g-v_g\partial_\nu (u_\lambda-u) \big]d s
+\int_\Gamma\big[ (u_\lambda-u)\partial_\nu w^s-w^s\partial_\nu (u_\lambda-u) \big]d s\nonumber\\
&=&\int_\Gamma\big[ (u_\lambda-u)\partial_\nu v_g-v_g\partial_\nu (u_\lambda-u) \big]d s\nonumber\\
&=&R(u_\lambda-u,v_g)=0.
\end{eqnarray}
From \eqref{eq2.8} and \ref{lemma1}, we have $\partial_\nu (w^s+v_g)+\lambda (w^s+v_g)=0$ on $\Gamma$. 
Together with \eqref{eq2.7}, $w^s+v_g$ satisfies
\begin{equation}\label{eq2.9}
  \left\{
   \begin{array}{ll}
   &\Delta (w^s+v_g)+k^2 n (w^s+v_g) =0,\ \ \ \textrm{in}\  B, \\
   &\partial_\nu (w^s+v_g)+\lambda (w^s+v_g)=0\ \ \ \textrm{on}\  \Gamma.
   \end{array}\right.
\end{equation}
Since $\lambda$ is not a Stekloff eigenvalue, \eqref{eq2.9} only has the trivial solution $w^s+v_g=0$ in $B$.
From \eqref{eq2.7} and the unique continuity principle, $w^s+v_g=0$ in $\mathbb{R}^2$, i.e., the Herglotz wave function $v_g=-w^s$ satisfies the radiation condition. 
This is a contradiction.
\end{proof}

The following theorem is the main result on the reconstruction of Stekloff eigenvalues from Cauchy data.
\begin{theorem}\label{theorem2}
\begin{itemize}
\item[1.] If $\lambda$ is not a Stekloff eigenvalue of \eqref{stekloff} and $z\in B$, then there exists a sequence $\{g_n\}, g_n\in L^2(\mathbb{S})$, such that
\begin{equation}\label{eq2.10}
\lim\limits_{n\rightarrow \infty}R\big(u_\lambda-u,v_{g_n}\big)=R(u_\lambda,\Phi_z),\ \ \ \ u_\lambda\in U_\lambda,\  u\in U
\end{equation}
and $v_{g_n}$ converges in $L^2(B)$.
\item[2.] If $\lambda$ is a Stekloff eigenvalue, then for every sequence $\{g_n^z\}, g_n^z\in L^2(\mathbb{S})$ satisfying
\begin{equation}\label{eq2.11}
\lim\limits_{n\rightarrow \infty}R\big(u_\lambda-u,v_{g_n^z}\big)=R(u_\lambda,\Phi_z), \ \ \ \ u_\lambda\in U_\lambda,\  u\in U,
\end{equation}
$\lim\limits_{n\rightarrow \infty}\|v_{g_n^z}\|_{H^1(B)}=\infty$ for almost every $z\in B$.
\end{itemize}
\end{theorem}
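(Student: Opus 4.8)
The idea is to turn the reciprocity--gap equations \eqref{eq2.10}--\eqref{eq2.11} into a statement about the inhomogeneous interior Stekloff problem on $B$ and then to use the standard linear--sampling dichotomy. As a preliminary step I would move everything onto $\Gamma$. From $\partial_\nu u_\lambda+\lambda u_\lambda=0$ on $\Gamma$ one has $R(u_\lambda,\Phi_z)=\int_\Gamma u_\lambda\,(\partial_\nu\Phi_z+\lambda\Phi_z)\,ds=:\int_\Gamma u_\lambda h_z\,ds$. Repeating the computation in the proof of Theorem~\ref{theorem1} with a non-zero right-hand side — Green's second theorem applied twice, together with the fact that $u$ and the total field $W_g:=v_g+w^s$ (here $w^s$ solves \eqref{eq2.7} for the incident wave $v_g$) both satisfy the medium equation in $B$ — yields $R(u_\lambda-u,v_g)=\int_\Gamma u_\lambda\,(\partial_\nu W_g+\lambda W_g)\,ds$. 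Hence \eqref{eq2.10} and \eqref{eq2.11} say precisely that $\int_\Gamma u_\lambda\,(\partial_\nu W_{g_n}+\lambda W_{g_n}-h_z)\,ds\to0$ in $L^2(C)$, and by Lemma~\ref{lemma1} the only admissible limit of the Stekloff trace $\partial_\nu W_{g_n}+\lambda W_{g_n}$ is $h_z$.

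For Part~1, since $\lambda$ is not a Stekloff eigenvalue the interior problem
\[
\Delta W_z+k^2nW_z=0 \quad\text{in } B,\qquad \partial_\nu W_z+\lambda W_z=h_z \quad\text{on } \Gamma
\]
has a unique solution $W_z\in H^1(B)$: its variational form differs from \eqref{variation} only by the source $\langle h_z,v\rangle$, so the governing operator is Fredholm and its kernel is trivial precisely by the eigenvalue hypothesis. I would then realise $W_z$ as a limit of total fields generated by Herglotz waves: using the density of Herglotz wave functions in the $H^1(B)$ solutions of the Helmholtz equation, together with the continuous (in fact compact) dependence of $w^s$ on the incident field, one constructs $g_n\in L^2(\mathbb S)$ for which $v_{g_n}$ converges in $L^2(B)$ to the incident component of $W_z$ and $W_{g_n}$ approximates $W_z$ well enough that $\int_\Gamma u_\lambda\,(\partial_\nu W_{g_n}+\lambda W_{g_n})\,ds\to\int_\Gamma u_\lambda h_z\,ds$ in $L^2(C)$; this gives \eqref{eq2.10} together with the asserted convergence of $v_{g_n}$.

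For Part~2 I would argue by contradiction. Suppose $\lambda$ is a Stekloff eigenvalue and that, for every $z$ in some set of positive measure, there is a sequence $\{g_n^z\}$ as in \eqref{eq2.11} along a subsequence of which $\|v_{g_n^z}\|_{H^1(B)}$ stays bounded. Extracting weak limits $v_{g_n^z}\rightharpoonup v$ and $W_{g_n^z}\rightharpoonup W$ in $H^1(B)$, with $\Delta W+k^2nW=0$ in $B$, passing to the limit in \eqref{eq2.11} and using Lemma~\ref{lemma1}, we get $\partial_\nu W+\lambda W=h_z$ on $\Gamma$; i.e.\ $W$ solves the inhomogeneous interior Stekloff problem at an eigenvalue. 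By the Fredholm alternative this forces $h_z$ to be $L^2(\Gamma)$--orthogonal to the (finite-dimensional) adjoint Stekloff eigenspace. Using $\Phi(x,z)=\Phi(z,x)$, this orthogonality reads as the vanishing at $z$ of a combined double/single layer potential $\Psi$ built from such an eigenfunction. Since $\Psi$ is real-analytic in $z\in B$, either its zero set in $B$ has measure zero or $\Psi\equiv0$ on $B$; in the second case the jump relations on $\Gamma$ identify $\Psi^+|_\Gamma$ with (a conjugate of) the eigenfunction and show $\Psi$ is a radiating exterior solution with impedance $\lambda$, which must vanish because $\Im\lambda\ge0$ for every Stekloff eigenvalue here — whence the eigenfunction has zero Cauchy data on $\Gamma$ and vanishes by unique continuation, a contradiction. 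So the set of such $z$ lies in the zero set of a nontrivial analytic function and is null, contradicting the boundedness assumption; therefore $\|v_{g_n^z}\|_{H^1(B)}\to\infty$ for a.e.\ $z\in B$.

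The step I expect to be the real work is the construction in Part~1: the interior-problem solution $W_z$ is in general \emph{not} itself a total field, so one must reach it by total fields of Herglotz incident waves in a topology strong enough to pass to the limit in the boundary pairings while keeping $v_{g_n}$ convergent in $L^2(B)$; this hinges on the appropriate density/completeness of Herglotz-generated total fields on $B$ and on isolating the right convergent incident component. In Part~2 the only non-routine ingredient is the passage from ``the compatibility condition fails for some $z$'' to ``for a.e.\ $z$'', which the analyticity of $\Psi$ and the exterior-uniqueness/unique-continuation argument above supply.
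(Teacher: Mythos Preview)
Your overall strategy matches the paper's. For Part~1 the paper does exactly what you describe: it solves the interior problem
\[
\Delta w_z+k^2 n\,w_z=0\ \text{in }B,\qquad \partial_\nu w_z+\lambda w_z=\partial_\nu\Phi_z+\lambda\Phi_z\ \text{on }\Gamma,
\]
splits $w_z=w_z^i+w_z^s$ with $w_z^i$ a Helmholtz solution in $B$ and $w_z^s$ the radiating scattered field it generates, approximates $w_z^i$ by Herglotz functions $v_{g_n}$ (so $v_{g_n}+w_z^s\to w_z$), and then verifies \eqref{eq2.10} by the same two applications of Green's second theorem you propose. So your anticipated ``real work'' in Part~1 is precisely the decomposition $w_z=w_z^i+w_z^s$ supplied by Lemma~3.1 of \cite{CakoniColton2016}; once one has that, the convergent incident component is simply $w_z^i$.

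For Part~2 the paper also argues by contradiction, extracts a weak limit, and reduces to solvability of the interior Stekloff problem at an eigenvalue, just as you do. The difference is in the endgame. You propose to read the compatibility condition $\int_\Gamma h_z\,\omega_\lambda\,ds=0$ as the vanishing of a combined layer potential $\Psi(z)$, invoke analyticity to pass from a set of positive measure to $\Psi\equiv 0$ in $B$, and then use jump relations together with \emph{exterior} uniqueness for the impedance problem (hence the appeal to $\Im\lambda\ge 0$) to force the eigenfunction's Cauchy data to vanish. The paper is shorter: after using $\partial_\nu\omega_\lambda+\lambda\omega_\lambda=0$ to rewrite the condition as
\[
\int_\Gamma\bigl(\partial_\nu\Phi(\cdot,z)\,\omega_\lambda-\Phi(\cdot,z)\,\partial_\nu\omega_\lambda\bigr)\,ds=0,
\]
it simply identifies this integral with $\omega_\lambda(z)$ via Green's representation, obtains $\omega_\lambda\equiv 0$ on a small ball $B_\rho\subset B$, and concludes by interior unique continuation. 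Thus the paper never leaves $B$ and never invokes exterior uniqueness; your route works too, but the paper's shortcut is worth knowing since it replaces the analyticity/jump/exterior-uniqueness chain by a single representation-formula identification. Conversely, your formulation (positive-measure set rather than a ball, and an explicit analytic-function argument) handles the ``a.e.\ $z$'' quantifier a bit more carefully than the paper's choice of a ball $B_\rho$.
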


\begin{proof}
1. Let $w_z$ be the solution of the following problem
\begin{equation}\label{eq2.12}
  \left\{
   \begin{array}{ll}
   &\Delta w_z+k^2 n w_z =0,\ \ \ \textrm{in}\  B, \\
   &\partial_\nu w_z+\lambda w_z=\partial_\nu \Phi(\cdot,z)+\lambda \Phi(\cdot,z)\ \ \ \textrm{on}\  \Gamma.
   \end{array}\right.
\end{equation}
From Lemma 3.1 of \cite{CakoniColton2016}, we have
\begin{equation*}
w_z=w_z^i+w_z^s,
\end{equation*}
where $w_z^i$ satisfies the Helmholtz equation in $B$ and $w_z^s\in H_{loc}^2(\mathbb{R}^2)$ is a radiation solution to
\begin{equation}\label{eq2.13}
  \left\{
   \begin{array}{ll}
   &\Delta w^s_z+k^2 n w^s_z =k^2 (1-n)w_z^i,\ \ \ \textrm{in}\  \mathbb{R}^2, \\
   &\lim\limits_{{r}\rightarrow\infty}r^{\frac{1}{2}}(\partial w^s_z/\partial r-{i} kw^s_z)=0,\ \ \ r=|x|.
   \end{array}\right.
\end{equation}
Due to the denseness property  (Theorem 5.21 of \cite{ColtonKress2013}), there exists a sequence of Herglotz wave functions $\{v_{g_n}\}$ such that
\begin{equation}\label{eq2.13b}
v_{g_n}+w_z^s\rightarrow w_z^i+w_z^s=w_z,\ \ \ \ n\rightarrow \infty.
\end{equation}
Next we show that $\{v_{g_n}\}$ satisfies $\lim\limits_{n\rightarrow \infty}R\big(u_\lambda-u,v_{g_n}\big)=R(u_\lambda,\Phi_z)$.
Using Green's second theorem twice, one has that
\begin{eqnarray}\label{eq2.14}
&&\lim\limits_{n\rightarrow \infty}R(u_\lambda-u,v_{g_n})-R(u_\lambda,\Phi_z)\nonumber\\
&=&\lim\limits_{n\rightarrow \infty}\int_{\Gamma}\big[(u_\lambda-u)\partial_\nu v_{g_n}-v_{g_n}\partial_\nu(u_\lambda-u) \big]d s-\int_\Gamma \big[u_\lambda\partial_\nu \Phi_z-\Phi_z\partial_\nu u_\lambda \big]d s\nonumber\\
&=&\lim\limits_{n\rightarrow \infty}\int_{\Gamma}\big[(u_\lambda-u)\partial_\nu (v_{g_n}+w_z^s)-(v_{g_n}+w_z^s)\partial_\nu(u_\lambda-u) \big]d s\nonumber\\
&&\qquad -\int_\Gamma \big[u_\lambda\partial_\nu \Phi_z-\Phi_z\partial_\nu u_\lambda \big]d s\nonumber\\
&=&\lim\limits_{n\rightarrow \infty}\int_{\Gamma}\big[u_\lambda\partial_\nu (v_{g_n}+w_z^s)-(v_{g_n}+w_z^s)\partial_\nu u_\lambda \big]d s\nonumber\\
&&\qquad -\lim\limits_{n\rightarrow \infty}\int_{\Gamma}\big[u\partial_\nu (v_{g_n}+w_z^s)-(v_{g_n}+w_z^s)\partial_\nu u \big]d s
 -\int_\Gamma\big[ u_\lambda\partial_\nu \Phi_z-\Phi_z\partial_\nu u_\lambda \big]d s\nonumber\\
&=&\lim\limits_{n\rightarrow \infty}\int_{\Gamma}\big[u_\lambda\partial_\nu (v_{g_n}+w_z^s)-(v_{g_n}+w_z^s)\partial_\nu u_\lambda \big]d s-\int_\Gamma\big[ u_\lambda\partial_\nu \Phi_z-\Phi_z\partial_\nu u_\lambda \big]d s\nonumber\\
&=&\lim\limits_{n\rightarrow \infty}\int_{\Gamma}\big[u_\lambda\partial_\nu (v_{g_n}+w_z^s-\Phi_z)-(v_{g_n}+w_z^s-\Phi_z)\partial_\nu u_\lambda \big]d s\nonumber\\
&=&\lim\limits_{n\rightarrow \infty}\int_{\Gamma}u_\lambda\big[\partial_\nu (v_{g_n}+w_z^s-\Phi_z)+\lambda(v_{g_n}+w_z^s-\Phi_z)\big]d s\nonumber\\
&=&0,
\end{eqnarray}
where the last step is due to \eqref{eq2.12} and \eqref{eq2.13b}.

2. Assume on the contrary that for $z\in B_\rho$, where $B_\rho\subset B$ is a small ball of radius $\rho$,
$\|v_{g_n^z}\|_{H^1(B)}$ is bounded as $n\rightarrow \infty$. Then there exists a subsequence of $v_{g_n^z}$, still denoted by $v_{g_n^z}$,
converging weakly to a function $v^i\in H^1(B)$. Then
\begin{equation}\label{eq2.15_1}
\int_\Gamma \big[(u_\lambda-u)\partial_\nu v^i-v^i\partial_\nu(u_\lambda-u)\big]ds-\int_\Gamma \big[u_\lambda\partial \Phi_z-\Phi_z\partial_\nu u_\lambda\big]ds=0.
\end{equation}
Let $w^s\in H_{loc}^2(\mathbb{R}^2)$ be a radiating solution to
\begin{equation}\label{eq2.15}
  \left\{
   \begin{array}{ll}
   &\Delta w^s+k^2 n w^s =k^2 (1-n)v^i,\ \ \ \textrm{in}\  \mathbb{R}^2, \\
   &\lim\limits_{{r}\rightarrow\infty}r^{\frac{1}{2}}(\partial w^s/\partial r-{i} kw^s)=0,\ \ \ r=|x|.
   \end{array}\right.
\end{equation}
From the Green's second theorem and \eqref{eq2.15_1}, $w:=v^i+w^s$ satisfies
\begin{eqnarray}\label{eq2.16}
&&\int_\Gamma u_\lambda\big[\partial_\nu (w-\Phi_z)+\lambda(w-\Phi_z)\big]ds\nonumber\\
&=&\int_\Gamma\big[u_\lambda\partial_\nu w-w\partial_\nu u_\lambda\big]ds-\int_\Gamma\big[u_\lambda\partial_\nu \Phi_z-\Phi_z \partial_\nu u_\lambda \big]ds\nonumber\\
&=&\int_\Gamma\big[(u_\lambda-u)\partial_\nu w-w\partial_\nu (u_\lambda-u)\big]ds-\int_\Gamma\big[u_\lambda\partial_\nu \Phi_z-\Phi_z \partial_\nu u_\lambda \big]ds\nonumber\\
&=&\int_\Gamma\big[(u_\lambda-u)\partial_\nu w^s-w^s\partial_\nu (u_\lambda-u)\big]ds\nonumber\\
&&+\int_\Gamma\big[(u_\lambda-u)\partial_\nu v^i-v^i\partial_\nu (u_\lambda-u)\big]ds-\int_\Gamma\big[u_\lambda\partial_\nu \Phi_z-\Phi_z \partial_\nu u_\lambda\big]ds\nonumber\\
&=&\int_\Gamma\big[(u_\lambda-u)\partial_\nu w^s-w^s\partial_\nu (u_\lambda-u)\big]ds\nonumber\\
&=&0.
\end{eqnarray}
From \eqref{eq2.16} and \ref{lemma1},
\begin{equation*}
\partial_\nu (w-\Phi_z)+\lambda(w-\Phi_z)=0\ \ \ \ \textrm{on}\ \Gamma,
\end{equation*}
which, together with \eqref{eq2.15}, implies that $w$ satisfies
\begin{equation}\label{eq2.17}
  \left\{
   \begin{array}{ll}
   &\Delta w+k^2 n w =0\ \ \ \textrm{in}\  B, \\
   &\partial_\nu w+\lambda w=\partial_\nu \Phi(\cdot,z)+\lambda \Phi(\cdot,z)\ \ \ \textrm{on}\  \Gamma.
   \end{array}\right.
\end{equation}
From \ref{lemma1} and the proof of Theorem 3.3 of \cite{CakoniColton2016}, \eqref{eq2.17} is solvable if and only if
\begin{equation}\label{eq2.18}
  \int_\Gamma\bigg(\frac{\partial \Phi(\cdot,z)}{\partial \nu}+\lambda\Phi(\cdot,z)\bigg){\omega}_\lambda ds=0
\end{equation}
for each Stekloff eigenfunction ${\omega}_\lambda\in H^1(B)$.
Since ${\omega}_\lambda$ satisfies $\partial_\nu {\omega}_\lambda+\lambda {\omega}_\lambda=0$ on $\Gamma$, \eqref{eq2.18} becomes
\begin{equation*}
\int_\Gamma \bigg(\frac{\partial \Phi(\cdot,z)}{\partial \nu} {\omega}_\lambda-\Phi(\cdot,z)\frac{\partial {\omega}_\lambda}{\partial \nu}\bigg)ds=0.
\end{equation*}
Green's representation theorem implies that $\omega_\lambda(z)=0$ for $z\in B_\rho$. The unique continuation principle now implies that the Stekloff eigenfunction $\omega_\lambda=0$ in $B$, which is a contradiction.
\end{proof}

Based on \ref{theorem2}, the following reciprocity gap algorithm can be used to
reconstruct (several) Stekloff eigenvalues from Cauchy data.

\vskip 0.2cm
{\bf The RG Algorithm}\label{rgmethod}
\begin{itemize}
\item[1.] For a region of interests (e.g., an interval on $\mathbb R$ for real Stekloff eigenvalues or a rectangular region on $\mathbb C$ for complex Stekloff eigenvalues), generate a grid $T$.

\item[2.] For each $\lambda\in T$, solve the scattering problem \eqref{auxiliary} to compute the auxiliary Cauchy data $u_\lambda(\cdot,x_0)$ and $\partial_\nu u_\lambda(\cdot,x_0)$ on $\Gamma$.

\item[3.] Fix a point $z\in B$, use the Tikhonov regularization to compute
an approximate solution $g_\lambda\in L^2(\mathbb{S})$ to the integral equation
\begin{equation}\label{RGequation}
R\big(u_\lambda(\cdot,x_0)-u(\cdot,x_0),v_{g_\lambda}(\cdot)\big)=R\big(u_\lambda(\cdot,x_0),\Phi(\cdot,z)\big)\quad \forall x_0\in C,
\end{equation}
where $u(\cdot, x_0)$ is the solution to \eqref{medium} for $\Phi(\cdot, x_0), x_0\in C$.

\item[4.] Choose $\lambda$ as a Stekloff eigenvalue of \eqref{stekloff} if the norm of $g_\lambda$ is significantly larger (see Section 5.1).
\end{itemize}

\begin{remark}
The constructed solutions to the reciprocity gap equation may not form a divergent Herglotz wave function series.
Hence the above numerical algorithm might not be able to construct all the eigenvalues.
\end{remark}

\section{Reconstruction of the Index of Refraction}\label{bayes_for}
The algorithm in the previous section can reconstruct Stekloff eigenvalues using Cauchy data.
Given these reconstructed eigenvalues, in this section, we turn to {\bf IP2} to estimate the index of refraction.
The relation between the index of refraction and Stekloff eigenvalues is complicated and, to a large extend, unknown.
Even when $n(x)$ is constant, a single Stekloff eigenvalue cannot uniquely determine it.
Note that \ref{ThmNS} only holds on an appropriate interval.

To this end, we resort to the Bayesian approach, which has been popular for solving inverse problems in recent years \cite{JariE.Somersalo2006, Stuart2010AN}.
Firstly, the inverse problem is reformulated as a statistical inference for the index of refraction
using a Bayes formula. Then the Metropolis-Hastings algorithm is employed to explore the posterior distribution of $n(x)$.

\subsection{Bayesian Formulation}
Denote by $\mathcal{N}$ the normal distribution and $\mathcal{U}$ the uniform distribution. {\bf IP2} can be written as
the statistical inference of $n(x)$ such that
~\begin{equation} \label{eq 3.21}
{\boldsymbol \lambda}=\mathcal{G}(n)+E,
\end{equation}
where ${\boldsymbol \lambda} \in \mathbb{C}^m$ is a vector of (reconstructed) Stekloff eigenvalues, $n(x)$ is a random function, $\mathcal{G}$ is the
operator mapping $n(x)$ to ${\boldsymbol \lambda}$ based on \eqref{stekloff}, and $E$ is the random noise.
The noise $E \sim \mathcal{N}(0,\,\sigma^{2})$, which is modeled as additive and mutually independent of $n(x)$.
In the Bayesian framework, the prior information can be coded into the prior density $\pi _{pr} (n)$.
For example, if $n$ is known to be a real constant $n_0$ such that $a < n_0 < b$,
one may take the prior as the continuous uniform distribution, i.e.,  $n \sim \mathcal{U}(a,b)$.

Given Stekloff eigenvalues ${\boldsymbol \lambda}$, the goal of the Bayesian inverse problem is to seek statistical information of $n(x)$ 
by exploring the conditional probability distribution
 $\pi_{post}(n|{\boldsymbol \lambda})$, called the posterior distribution of $n$.
 An important quantity is the conditional mean (CM) of $n$ defined as
\begin{equation} \label{nCM} 
n_{CM}={\mathbb E}\{n|{\boldsymbol \lambda}\}= \int _{\mathbb{R}}n \pi_{post} (n|{\boldsymbol \lambda}) dn,
\end{equation}
which is an constant estimation of $n(x)$. If  $n \sim \mathcal{U}(a,b)$,
by the Bayes formula, the posterior distribution satisfies
\begin{equation} \label{eq 3.22}
\pi_{post}(n|{\boldsymbol \lambda})\propto \mathcal{N}({\boldsymbol \lambda}-\mathcal{G}(n),\,\sigma^{2}) \times \mathcal{U}(a,b),
\end{equation}
i.e.,
\begin{equation} \label{pipostnlambda}
\pi_{post}(n|{\boldsymbol \lambda})\propto \exp \Big(-\frac{1}{2 \sigma^{2} }|{\boldsymbol \lambda}-\mathcal{G}(n)|\Big)\times I(a\leq n\leq b),
\end{equation}
where $I$ is the density function for $\mathcal{U}(a,b)$.


\subsection{Markov Chain Monte Carlo Method}
To explore $\pi_{post}(n|{\boldsymbol \lambda})$ given in \eqref{pipostnlambda}, we employ the popular MCMC (Markov Chain Monte Carlo).
MCMC to estimate CM is as follows: design a Markov Chain $\{ X_{j}\}_{j=0}^{\infty}$ from required distribution
and take the mean of the chain to approximate expectation.
In particular, one could estimate ${\mathbb E}\{n|{\boldsymbol \lambda}\}$ by a sample mean using Monte Carlo integration:
\begin{equation} \label{eq 5}
{\mathbb E}\{n|{\boldsymbol \lambda}\} \approx \frac{1}{m}\sum_{j=1}^{m}n_{j},
\end{equation}
where $n_{j}$, $j=1,\cdots,m$, are samples drawing from $\pi_{post}(n|{\boldsymbol \lambda})$.
Two popular methods are Metropolis-Hasting (M-H) algorithm \cite{Metropolis1953,Hastings1970}
and Gibbs sampler \cite{GemanGeman1987}.
In this paper, we choose a delayed rejection adaptive Metropolis-Hasting algorithm \cite{HarrioEtal2006SC}.

\vskip 0.2cm
{\bf The MH Algorithm}\label{MHmethod}
\begin{enumerate}
\item Choose initial value $n_1 \in \mathbb R$ and set $j=1$;
\item Draw a sample $w$ from a proposal distribution
	\[
		q(n_j, w) \propto \text{exp}\left(-\frac{1}{2\gamma^2}|n_j - w|^2 \right),
	\]
	and compute
	\[
		\alpha(n_j, w) = \min \left(1, \frac{\pi_{post}(w| {\boldsymbol \lambda})}{\pi_{post}(n_j|{\boldsymbol \lambda})} \right);
	\]
\item Draw $t \sim \mathcal{U}(0, 1)$;
\item If $\alpha(n_j, w) \ge t$, set $n_j = w$, else $n_{j+1}=n_j$.
\item When $j=K$, the maximum sample size, stop; else, $j \leftarrow j+1$ and go to 2.
\end{enumerate}

\subsection{Spectral Indicator Method}
In the above algorithm, for each sample $n_j$, one needs to compute Stekloff eigenvalues, which are done by the
finite element method for \eqref{stekloff} \cite{SunZhou2016, LiuSunTurner2018}.
Note that the Stekloff eigenvalues are complex if $n(x)$ is a complex function. 
In addition, the reconstructed eigenvalues only approximate the exact ones and the multiplicities are not known in general.
Thus the numerical methods need to compute Stekloff eigenvalues of \eqref{stekloff} in a region on $\mathbb C$ close to the origin. 

A recently developed spectral indicator method (SIM) is a good fit for this case \cite{Huang2016JCP, Huang2017}.
Given reconstructed eigenvalues ${\boldsymbol \lambda}$, a rectangular region containing these eigenvalues is chosen.
Then SIM is used to compute all eigenvalues inside the region.

\section{Numerical Examples}
In this section, we present some numerical examples to use the RG method to reconstruct Stekloff eigenvalues from the Cauchy data
and estimate the index of refraction using the Bayesian approach.
Three scatterers are considered: a disc with radius $1$ centered at the origin, a square with vertices given by
\begin{equation}\label{square}
(0;-1),\ \ (1;0),\ \ (0;1),\ \ (-1;0),
\end{equation}
and an L-shaped domain given by
\begin{equation}\label{L}
(-0.9,1.1)\times (-1.1,0.9)\setminus [0.1,1,1]\times [-1.1,-0.1].
\end{equation}
Three different indices of refraction are chosen: i) $n(x)=5$, ii) $n(x)=4+2|x|$, and iii) $n(x)=2+4i$.

The synthetic scattering data is simulated by a finite element method with a perfectly matched layer (PML) for \eqref{medium} \cite{chenliu2005}.
The wavenumber is $k=1$. There are $100$ source points uniformly distributed on the curve $C$, a circle with radius $3$.
We compute the Cauchy data at $100$ points uniformly distributed on $\Gamma:=\partial B$ (a circle with radius $2$) and add $3\%$ noise.

\subsection{Reconstruction of Stekloff eigenvalues}\label{part1_exp}
Given Cauchy data, we show that several Stekloff eigenvalues close to the origin can be reconstructed effectively using the RG Algorithm presented 
in Section~\ref{RGA}.
For real $n(x)$, all the eigenvalues are real. We choose an interval $[-5, 5]$ and use a uniform grid $T$ given by
\[
T:=\{\lambda_{m}=-5+0.02m, \quad m=0, 1, \cdots, 500\}.
\]
For complex $n(x)$, the Skeloff eigenvalues are complex. We choose a domain $[-1, 0.5] \times [-0.5, 1]$ and
\[
T:=\{\lambda_{m_1,m_2}=(-1+0.02m_1)+i(-0.5+0.02m_2), \quad m_1, m_2=0, 1, \cdots, 75\}.
\]
Using a fixed point $z=(0.2, 0.6)$ in $B$, for each $\lambda\in T$,
a discretization of \eqref{RGequation} leads to a linear system
\begin{equation}\label{Alambdagflambda}
A^\lambda g_{\lambda}=f^\lambda,
\end{equation}
where $A^\lambda$ is a matrix given by
\[
A^\lambda_{l,j}=R\big(u_\lambda(x,x_l)-u(x,x_l),\exp({i}kx\cdot {d}_j)\big),\quad  l, j=1,2,\cdots,100,
\]
and $f^\lambda$ is a vector given by
\[
f^\lambda_l=R\big(u_\lambda(x,x_l),\Phi(x,z)\big),\quad l=1,2,\cdots,100.
\]
The Tikhonov regularization with the parameter $\alpha=10^{-5}$ is used
to compute an approximate solution $g_\lambda$ to \eqref{Alambdagflambda}:
\begin{equation*}
g_\lambda\approx \big((A^\lambda)^*A^\lambda+\alpha A^\lambda\big)^{-1}(A^\lambda)^*f^\lambda.
\end{equation*}

In the following examples, we show the plots of $|g_\lambda|$.
In all the figures, the crosses are the exact Stekloff eigenvalues computed using a finite element method \cite{SunZhou2016, LiuSunTurner2018}.
The results indicate that Stekloff eigenvalues close to the origin can be determined effectively.

{\bf Example 1} Real index of refraction $n(x)=5$.
The exact and reconstructed eigenvalues are shown in \ref{n5}. The plots of $|g_\lambda|$ for three domains are shown in \ref{fn5}.
\begin{table}[h!]
\centering
\begin{tabular}{l|l|l|l|l|l}\hline
disc & $1.2937$ & $-0.4763$& $-0.5839$ & $-1.2301$ &\\
& ($1.30$) & ($-0.48$) & ($-0.58$) & ($-1.23$)&\\\hline
square & $0.3792$ & $-0.5418$ & $-0.6148$& & \\
&($0.38$)& ($-0.54$)& ($-0.62)$& &\\\hline
L-shaped & $3.0218$& $0.6882$& $-0.5266$& $-0.5834$& $-1.2188$ \\
&($3.04$)& ($0.70$)& ($-0.52$)&($-0.58$)& ($-1.20$) \\\hline
\end{tabular}
\caption{\label{n5} The exact Stekloff eigenvalues and their reconstructions (in the parentheses)  for $n(x)=5$.}
\end{table}

\begin{figure}[h!]
  \centering
  \subfigure[\textbf{disc}]{
    \includegraphics[width=2.5in]{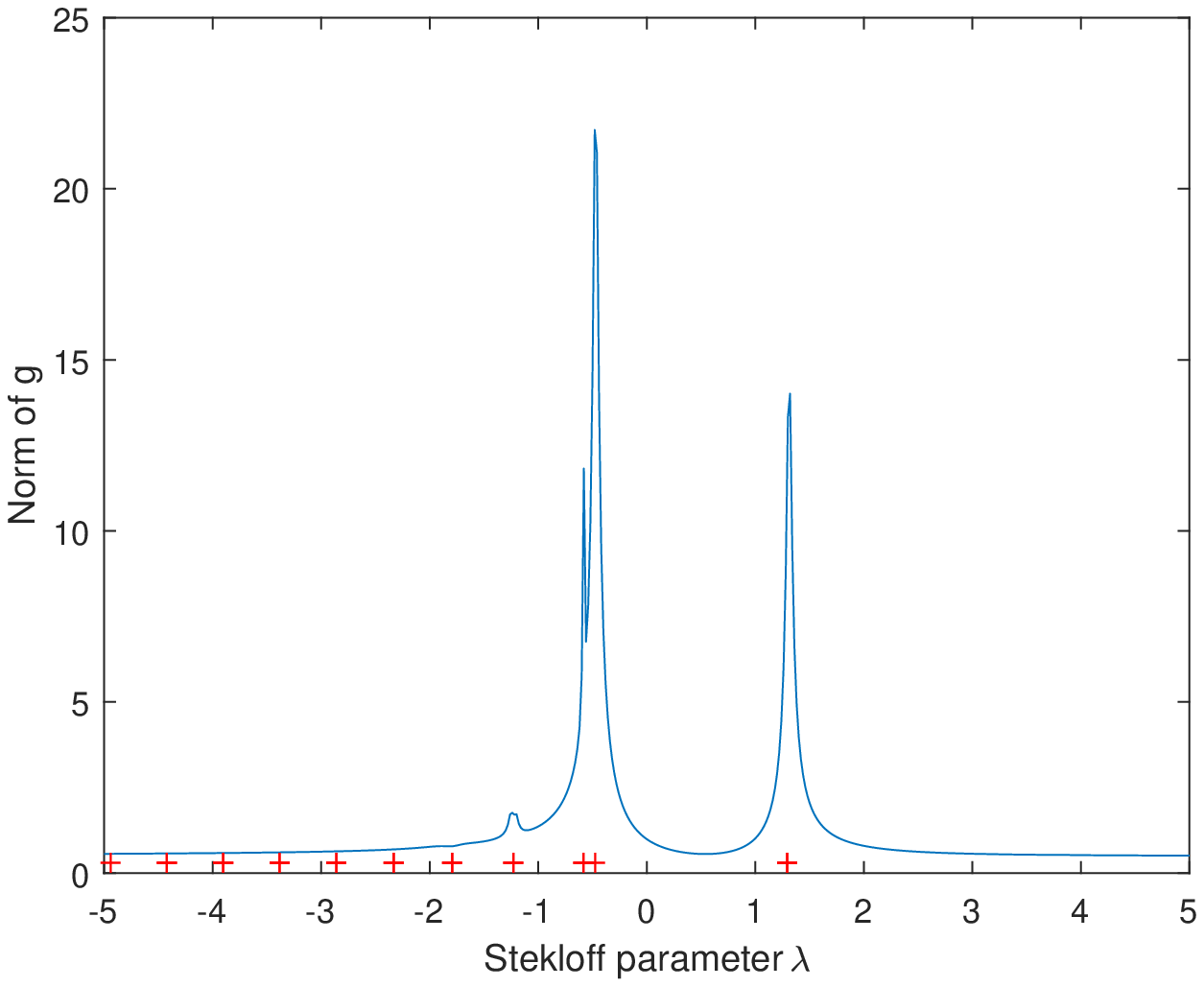}}
  \subfigure[\textbf{square}]{
    \includegraphics[width=2.5in]{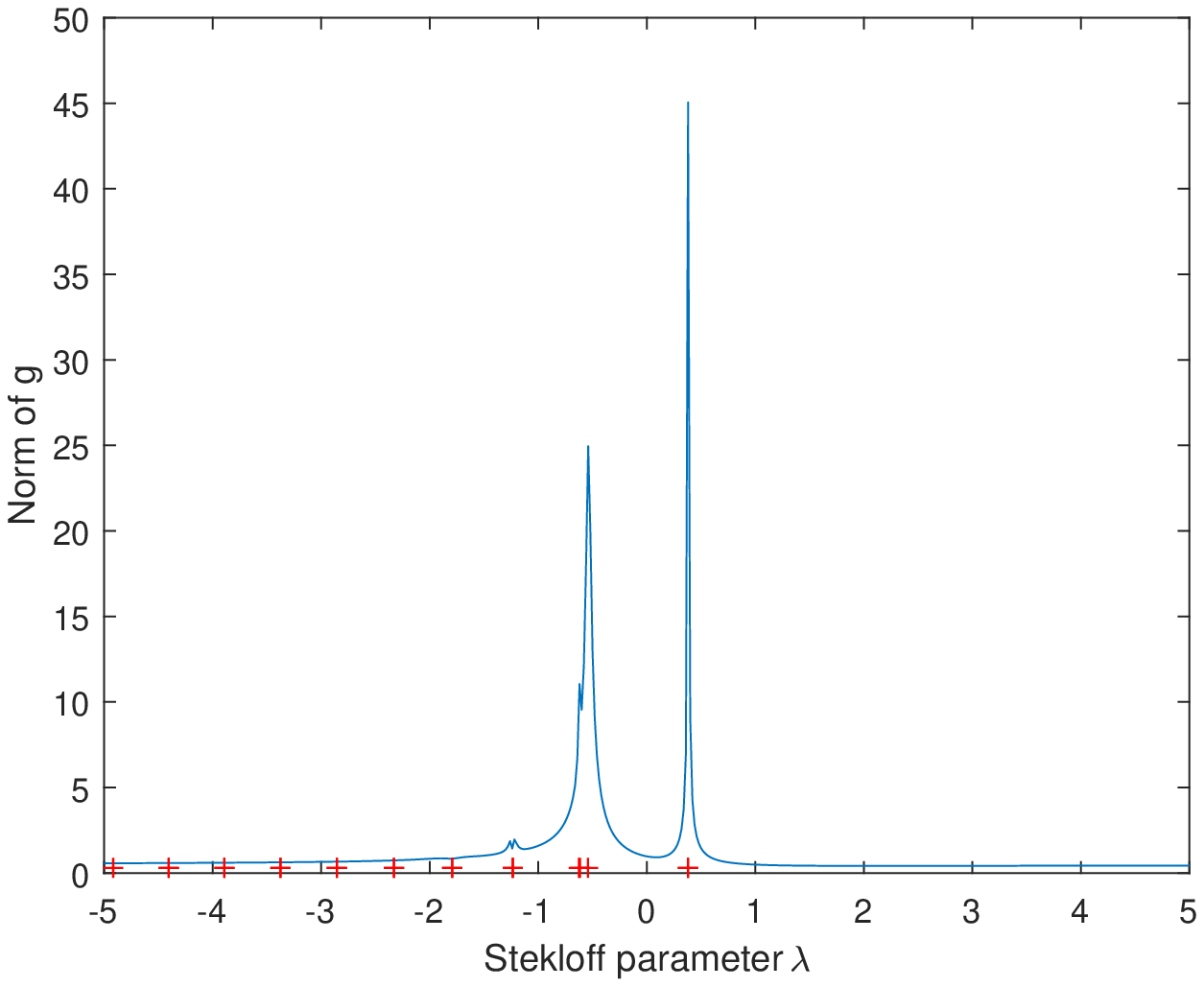}}
  \subfigure[\textbf{L-shaped}]{
    \includegraphics[width=2.5in]{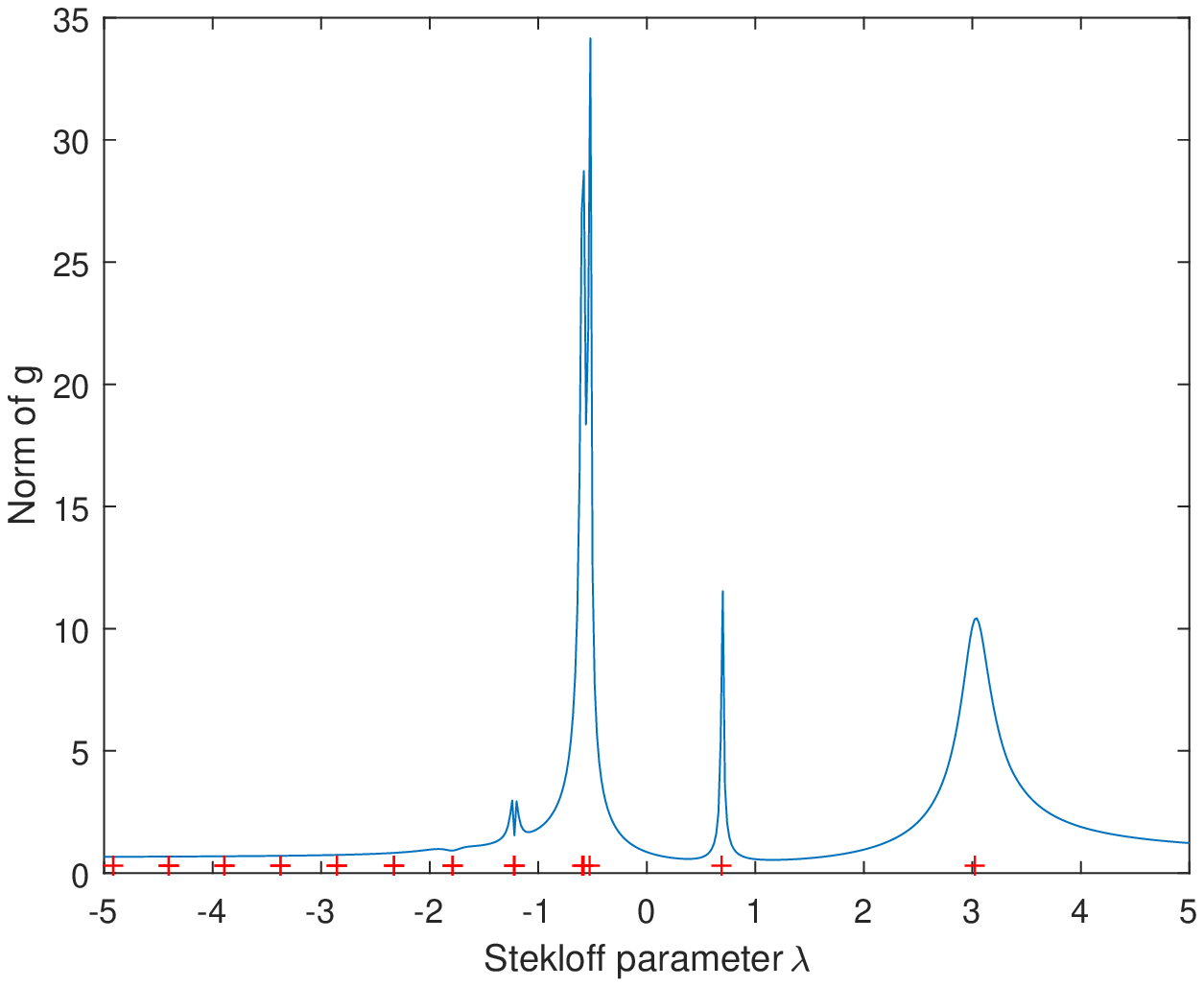}}
\caption{The plots of $|g_\lambda|$ against $\lambda$ for $n(x)=5$.}
\label{fn5}
\end{figure}

{\bf Example 2} Real index of refraction $n(x)=4+2|x|$.
The exact and reconstructed eigenvalues are shown in \ref{n42x}. The plots of $|g_\lambda|$ for three domains are shown in \ref{fn42x}.
\begin{table}[h!]
\centering
\begin{tabular}{l|l|l|l|l}\hline
disc& $2.0856$& $-0.4898$& $-0.5714$& $-1.2285$ \\
 & ($2.10$)& ($-0.48$)& ($-0.54$)& ($-1.20$)\\\hline
square & $0.4137$& $-0.5758$& $-1.2348$&  \\
&($0.42$)& ($-0.58$)& ($-1.22$)& \\\hline
L-shaped & $0.9825$& $-0.4956$& $-0.6018$& $-1.2116$ \\
&($1.00$)& ($-0.50$)& ($-0.60$)& ($-1.20$) \\\hline
\end{tabular}
\caption{\label{n42x} The exact Stekloff eigenvalues and their reconstructions (in the parentheses) for $n(x)=4+2|x|$.}
\end{table}
\begin{figure}[h!]
  \centering
  \subfigure[\textbf{disc}]{
    \includegraphics[width=2.5in]{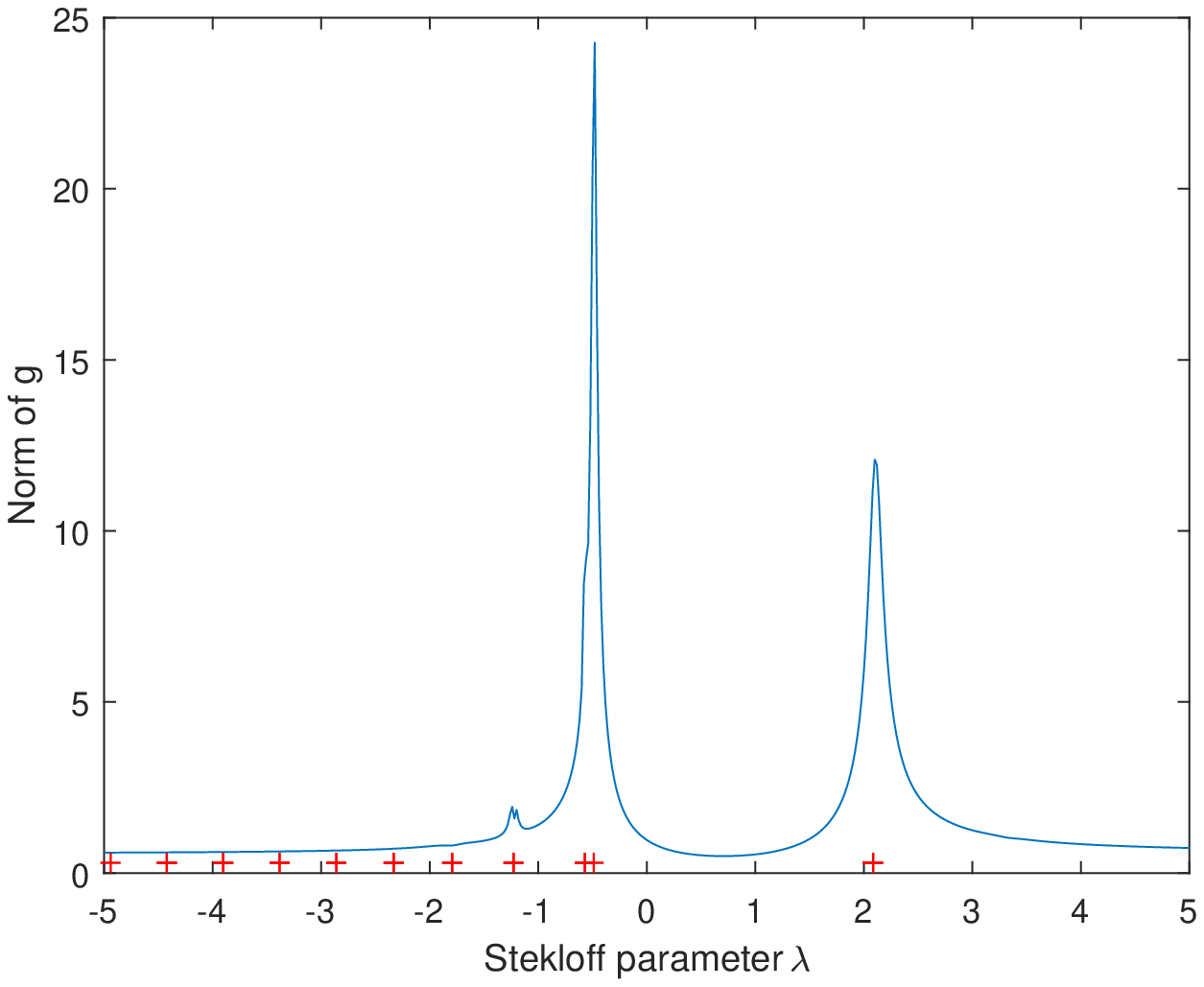}}
  \subfigure[\textbf{square}]{
    \includegraphics[width=2.5in]{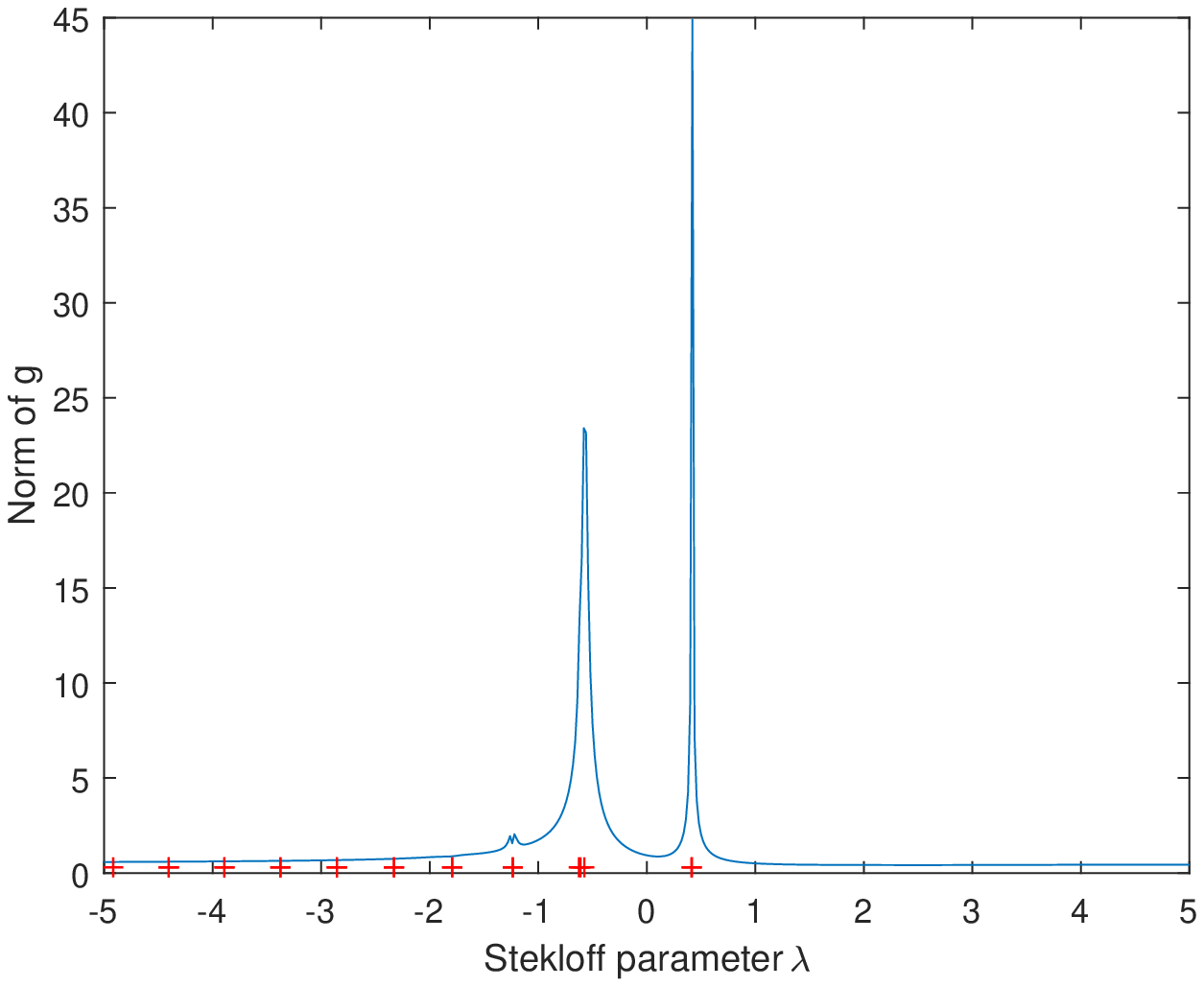}}
  \subfigure[\textbf{L-shaped}]{
    \includegraphics[width=2.5in]{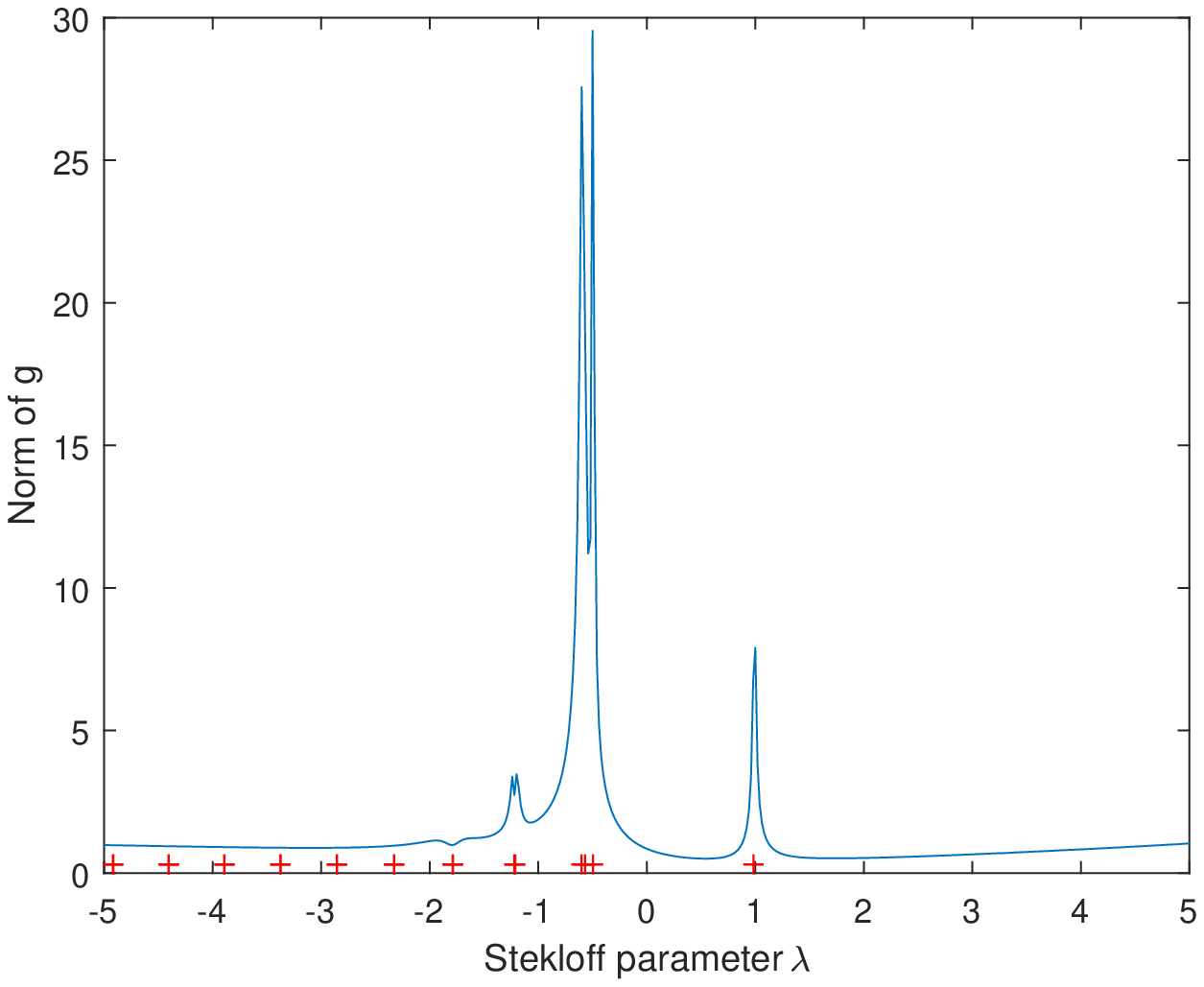}}
\caption{The plots of $|g_\lambda|$ against $\lambda$ for $n(x)=4+2|x|$.}
\label{fn42x}
\end{figure}

\begin{table}[h!]
\centering
\begin{tabular}{l|l|l|l}\hline
disc & $-0.0549 + 0.4854i$& $-0.0211+0.2652i$&$-0.6361 + 0.0390i$ \\
 & ($-0.06+0.46i$) &{-0.02+0.26i}&($-0.64+0.04i$)  \\\hline
square & $0.1303 + 0.5812i$& $0.0779+ 0.1415i$& $-0.6338+0.0170i$   \\
&($0.12+0.56i$) & ($0.08+0.14i$) & ($-0.64+0.02i$) \\\hline
L-shaped &  $-0.0902+ 0.5468i$& $-0.1029+ 0.3600i$& $0.0364 + 0.2228i$\\
 &($-0.10+0.52i$)& ($-0.10+0.36i$)& ($0.04+0.22i$)\\
& $-0.6385 + 0.0764i$& $-0.6346 + 0.0412i$ & \\
& ($-0.64+0.08i$)&($-0.64+0.04i$) & \\\hline
\end{tabular}
\caption{\label{n24i} The exact Stekloff eigenvalues and their reconstructions (in the parentheses) for $n(x)=2+4i$.}
\end{table}

{\bf Example 3} Complex index of refraction $n(x)=2+4i$.
The exact and reconstructed eigenvalues are shown in \ref{n24i}. The plots of $|g_\lambda|$ for three domains are shown in \ref{fn24iS}.
\begin{figure}[h!]
  \centering
  \subfigure[\textbf{disc}]{
    \includegraphics[width=2.5in]{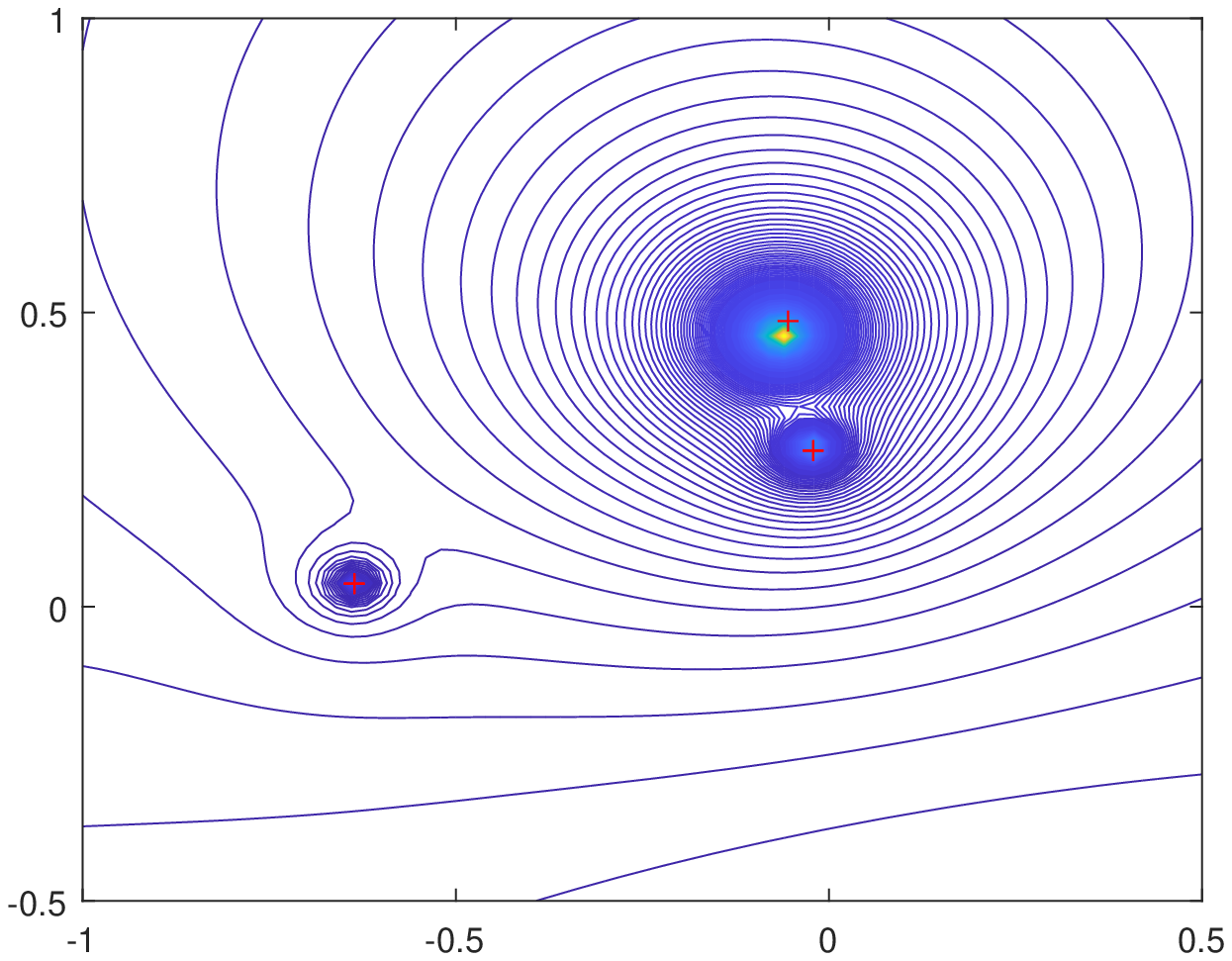}}
  \subfigure[\textbf{square}]{
    \includegraphics[width=2.5in]{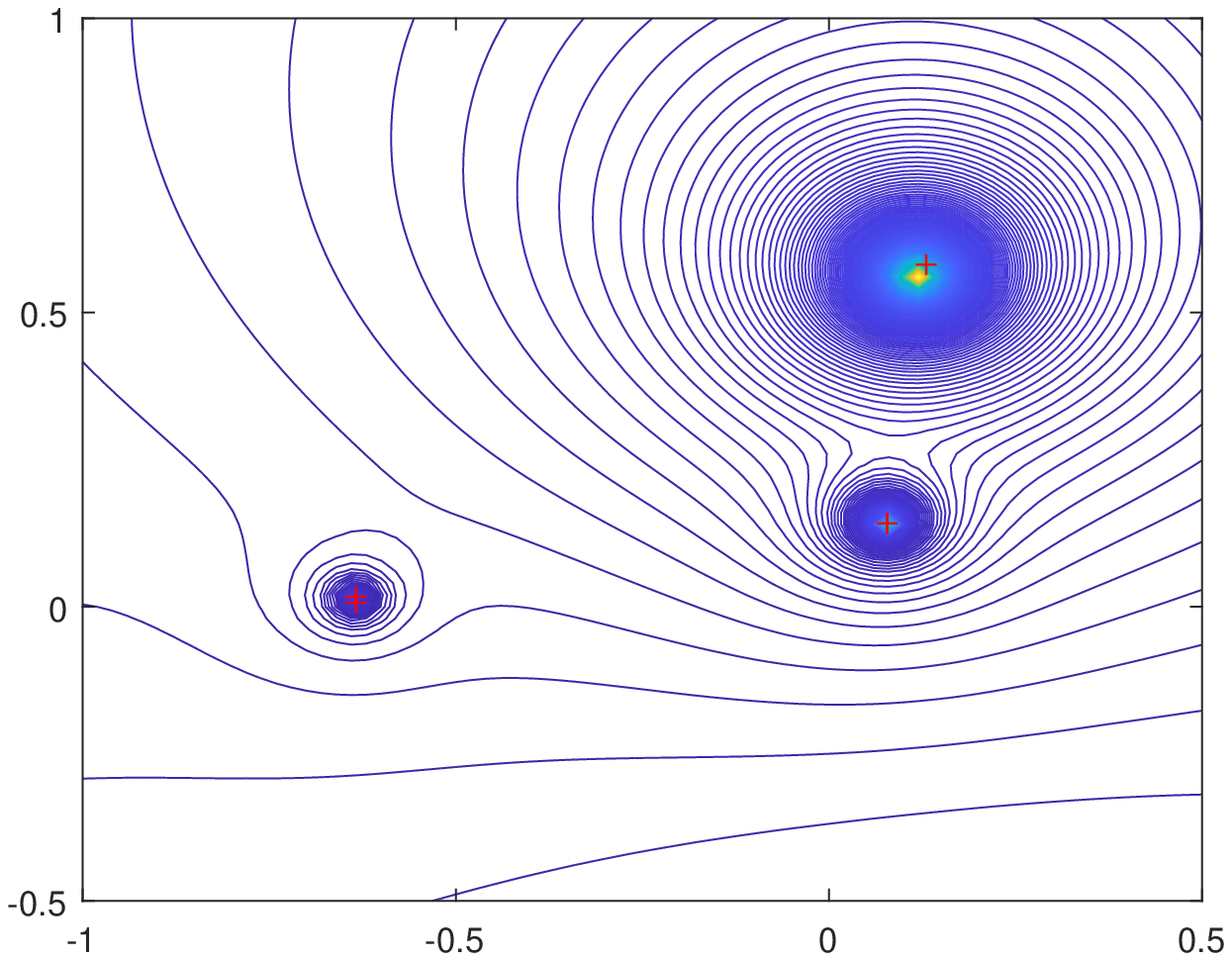}}
  \subfigure[\textbf{L-shaped}]{
    \includegraphics[width=2.5in]{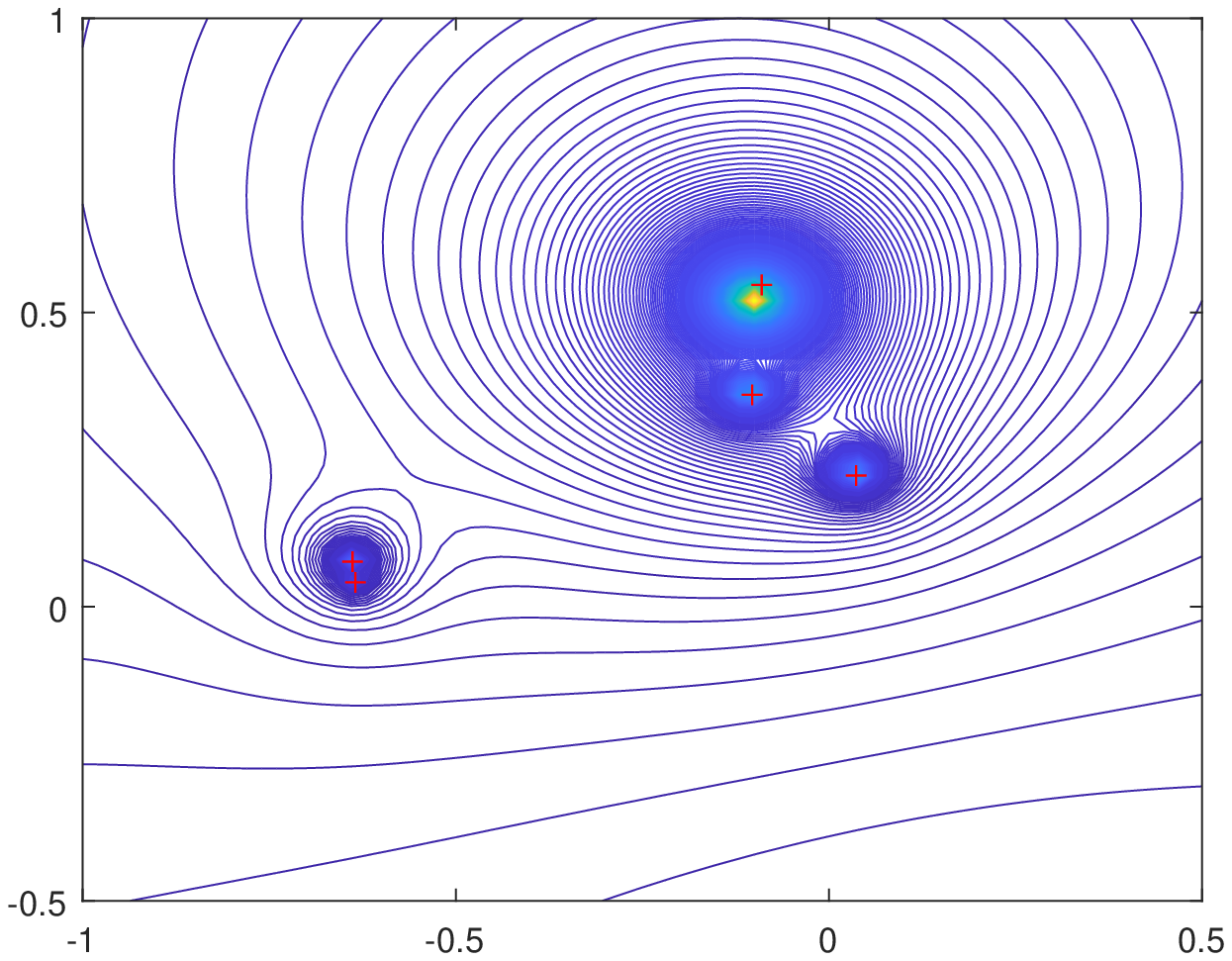}}
\caption{The plots of $|g_\lambda|$ against $\lambda$ for $n(x)=2+4i$.}
\label{fn24iS}
\end{figure}

\subsection{Estimation of the index of refraction}
Given the reconstructed Stekloff eigenvalues,
we present some numerical examples for the estimation of the index of refraction by the Bayesian approach.
The examples are rather naive. Nonetheless, the results show the potential of statistical approaches for inverse scattering problems.
Since the main goal is to show the effectiveness of the Bayesian approach, we assume that the shape of the scatterer is known in the following examples.

{\bf Example 4} Real constant index of refraction $n(x)=5$.
Assume one Stekloff eigenvalue is reconstructed from Cauchy data:  $-0.48$ for the disk, $-0.54$ for the square, and $-0.52$ for the L-shaped domain.
Since $n$ is a real constant, we take a uniform prior $\mathcal{U}(0,8)$. The posterior density is given by
\begin{equation}
\pi_{post}(n|\lambda)\propto \exp \Big(-\frac{1}{2 \sigma^{2} }\|\lambda-\mathcal{G}(n)\|^2\Big)\times I\{0<n<8\}.
\end{equation}
We generate 3000 samples for each domain. The initial sample is chosen to be $n_{1}=2$.
The rest samples $\{n_{i}\}_{i=2}^{3000}$ are drawn from the symmetric proposal distribution
\[
q(n_{j},n_{j-1})\propto \exp \Big( -\frac{1}{2 \gamma ^2}\|n_{j}-n_{j-1}\|^2 \Big),
\]
where $\gamma^{2}=2.4^2/2$.

\ref{table1} shows $n_{CM}$ for three domains. The Markov chains are shown in \ref{SE1n5}.
The samples concentrate around $n=5$ for the unit circle domain and square domain.
However, for the L-shaped domain, the samples are accumulated around two values, $5$ and $7$. In fact, this implies that one Stekloff eigenvalue
cannot uniquely determine the constant index of refraction.
If two Stekloff eigenvalues, $0.70$ and $-0.52$, are used (see \ref{SE2n5L}), we obtain $n_{CM}=5.0074$, which is a good approximation of $5$.
\begin{table}[h]
\centering
\begin{tabular}{c|c|c }
\hline
  $\sigma^2$ \qquad & domain\qquad & $n_{CM}$\\\hline
\multicolumn{1}{c|}{\multirow{3}{*}{$0.05$}} &circle &  4.9953
\\
    \multicolumn{1}{c|}{} & square &     5.0205  \\
  \multicolumn{1}{c|}{} & L-shaped domain &  6.2135 (5.0074) \\
  \hline
\end{tabular}
\caption{\label{table1} The posterior means for three domains ($n(x)=5$) using one Stekloff eigenvalue.
The value in the parentheses for the L-shaped domain is obtained using two Stekloff eigenvalues.}
\end{table}

\begin{figure}[h!]
  \centering
  \subfigure[\textbf{disc}]{
    \includegraphics[width=2.6in]{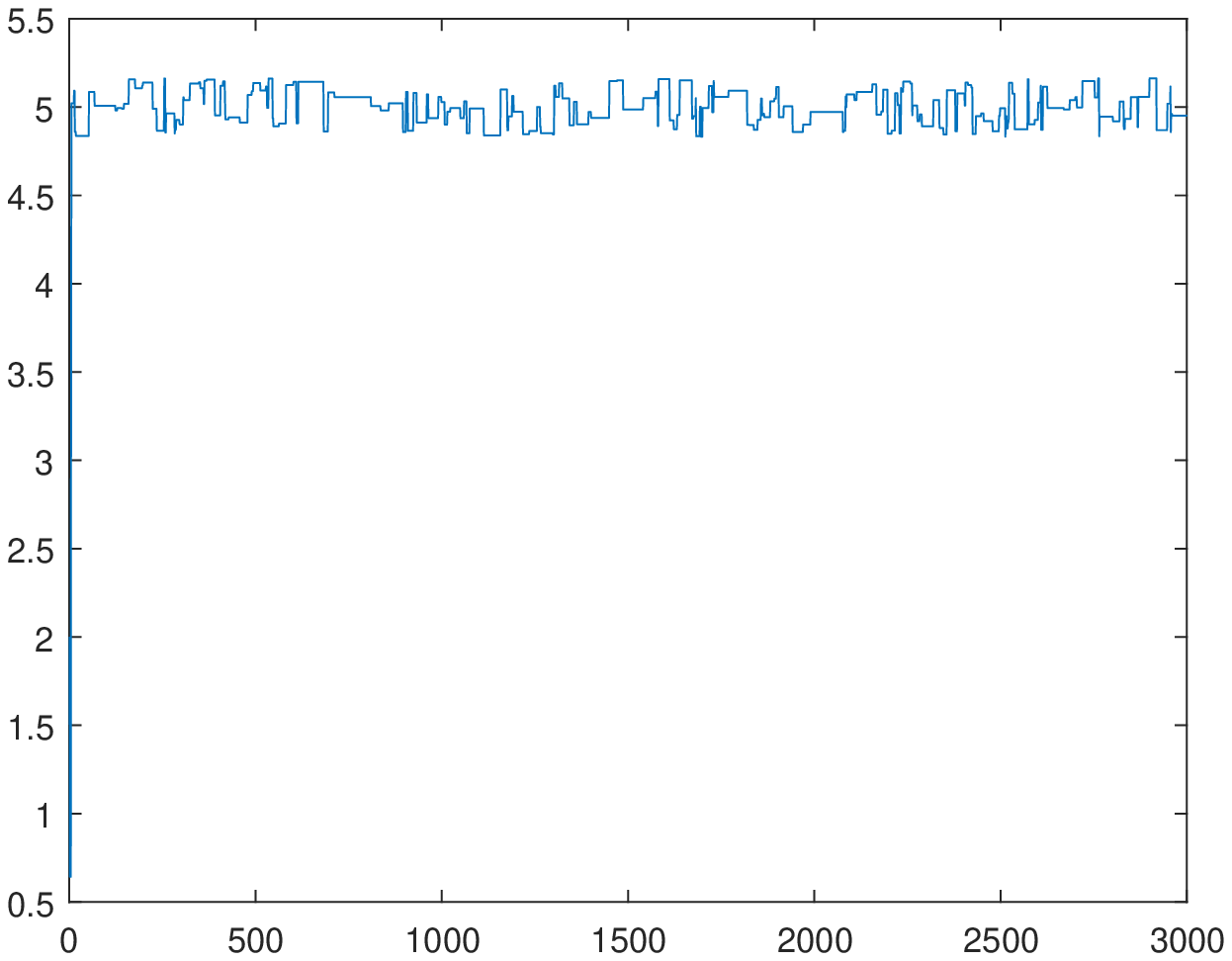}}
  \subfigure[\textbf{disc}]{
    \includegraphics[width=2.8in]{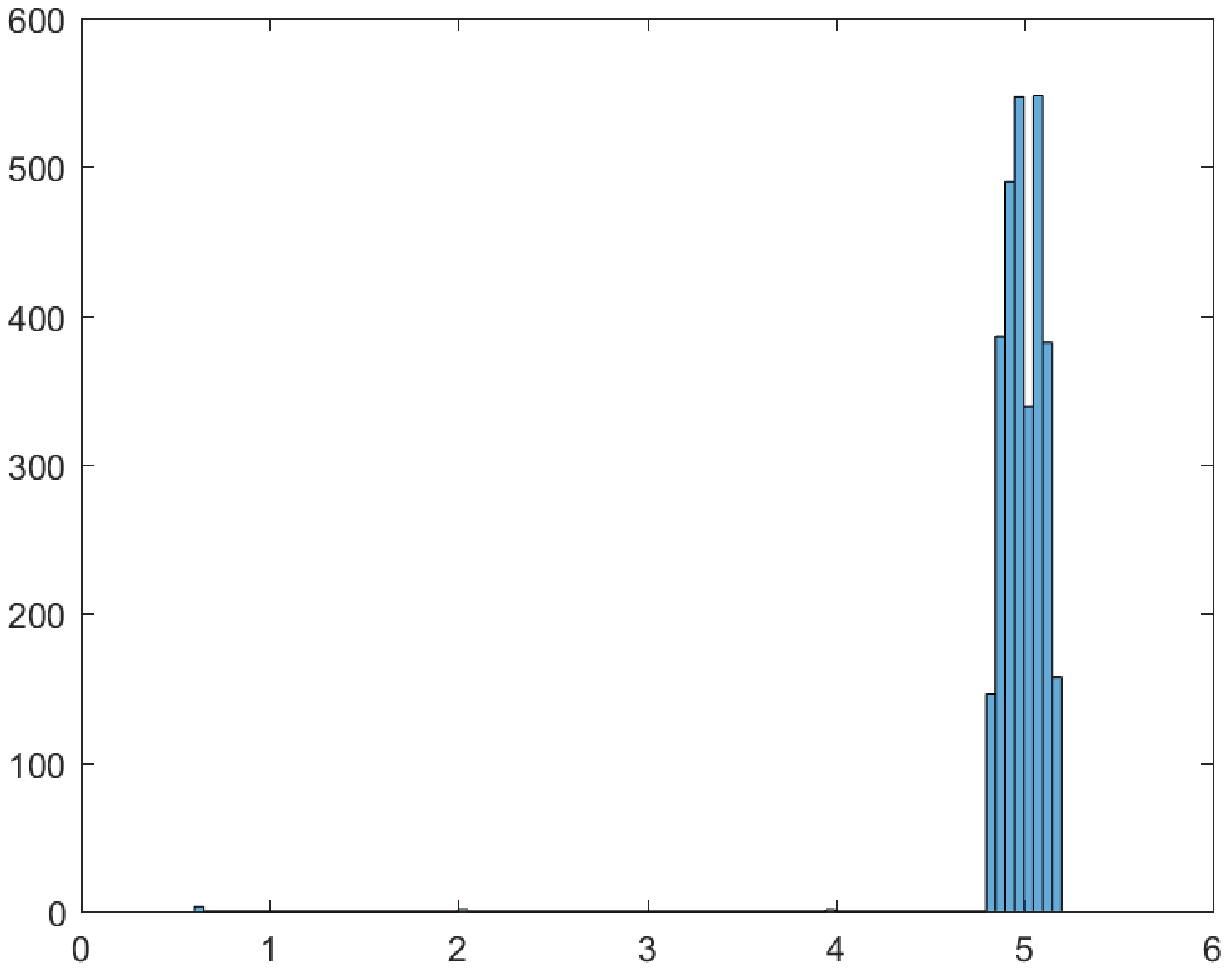}}
   \subfigure[\textbf{square}]{
    \includegraphics[width=2.6in]{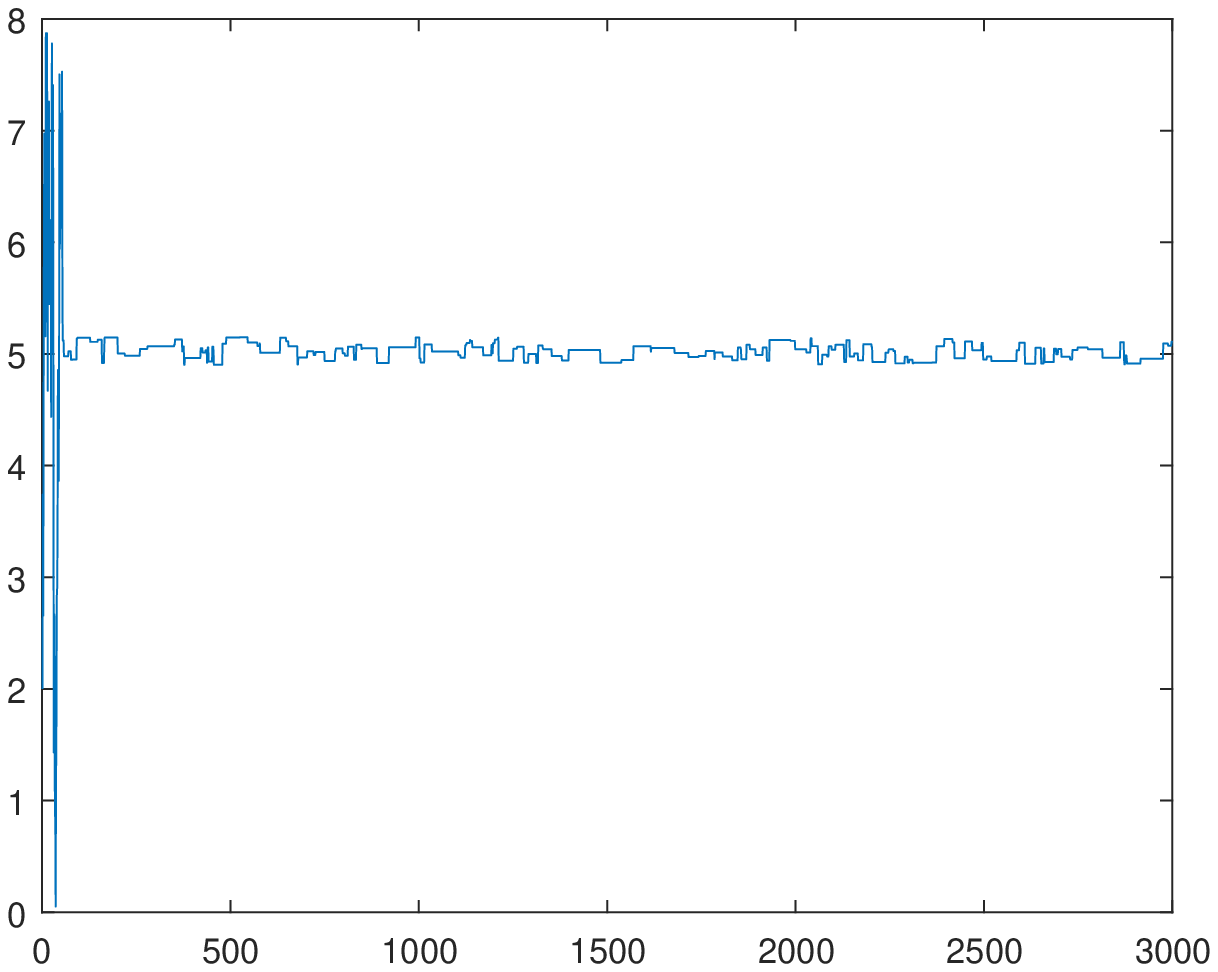}}
  \subfigure[\textbf{square}]{
    \includegraphics[width=2.8in]{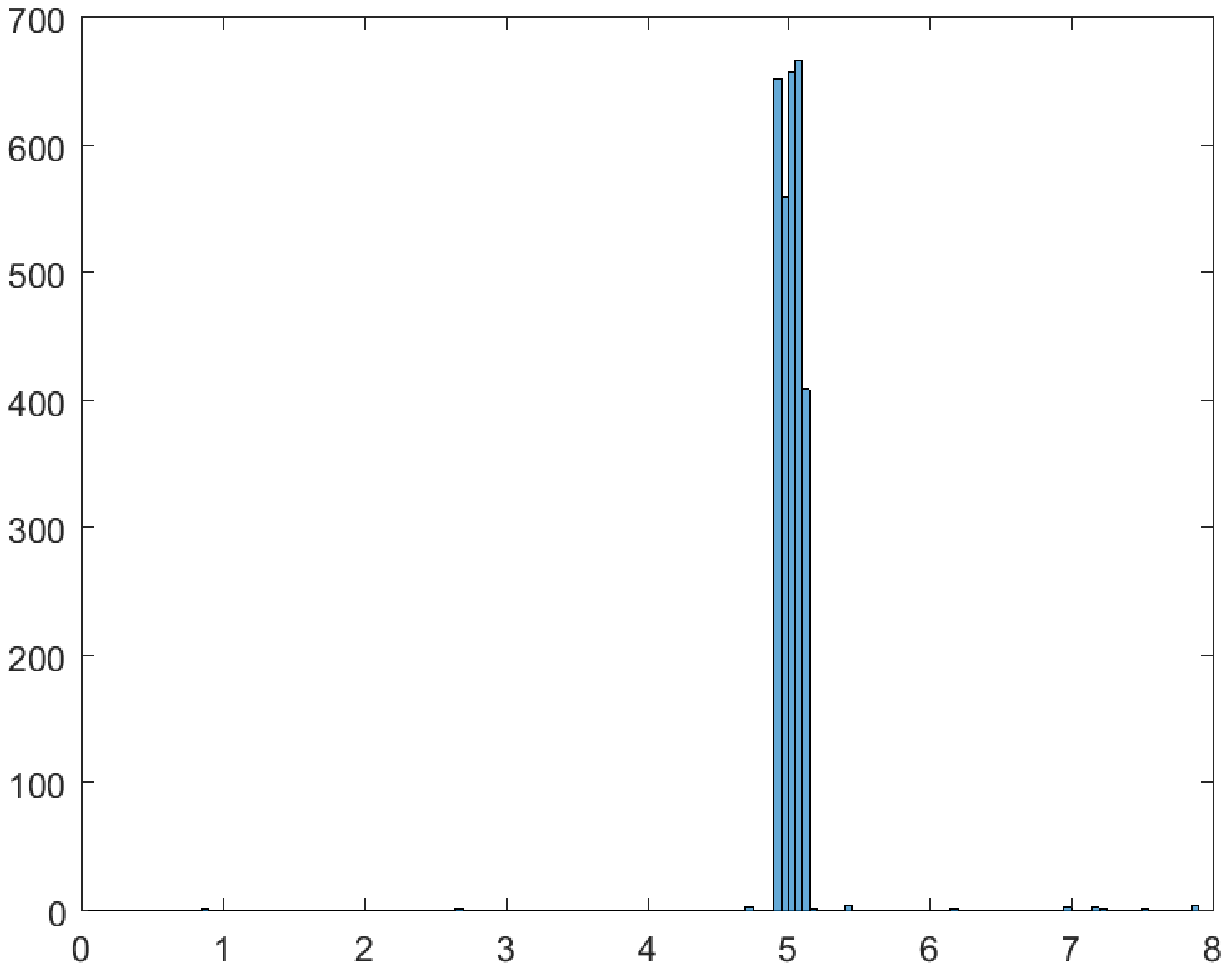}}
  \subfigure[\textbf{L-shaped}]{
    \includegraphics[width=2.6in]{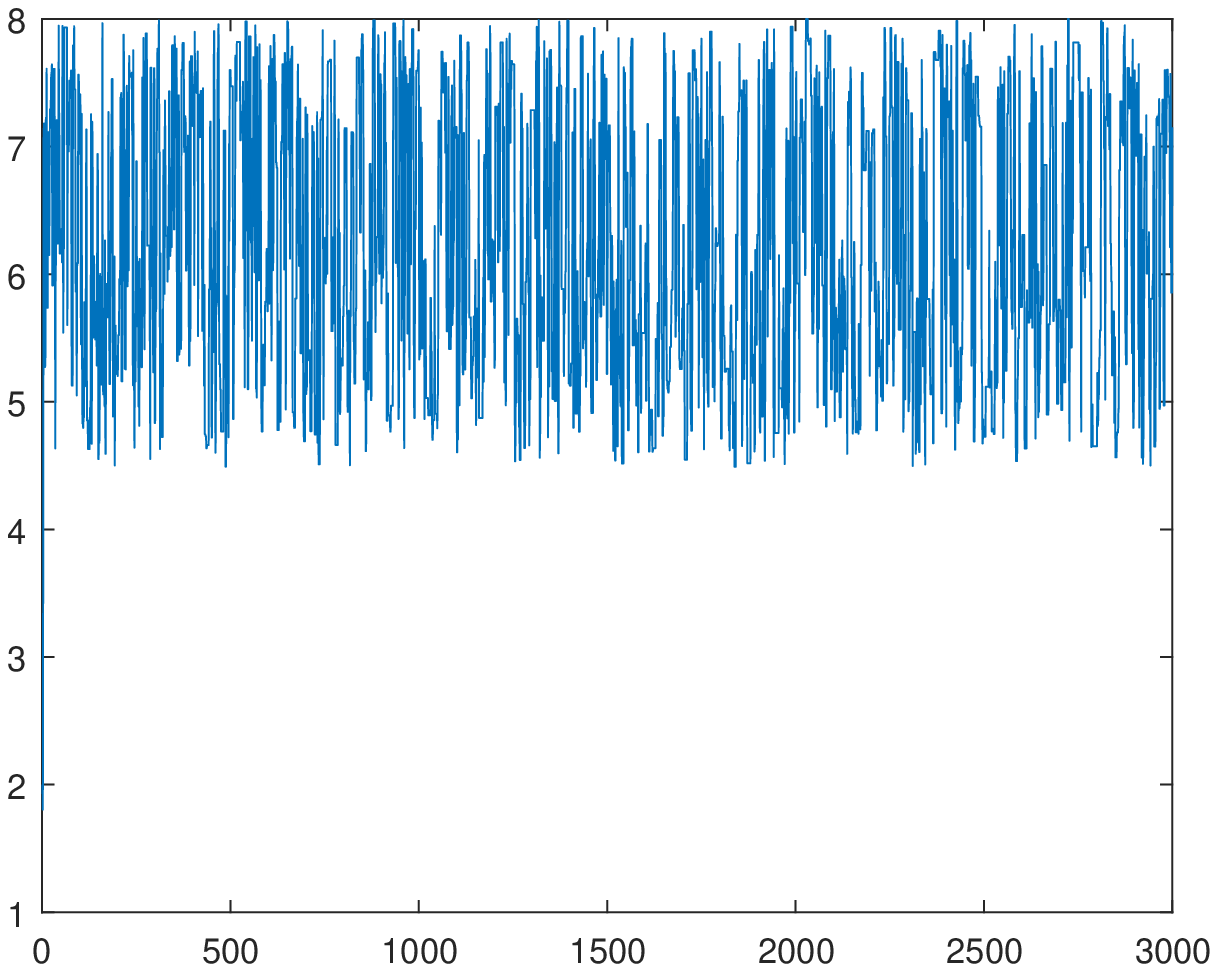}}
  \subfigure[\textbf{L-shaped}]{
    \includegraphics[width=2.8in]{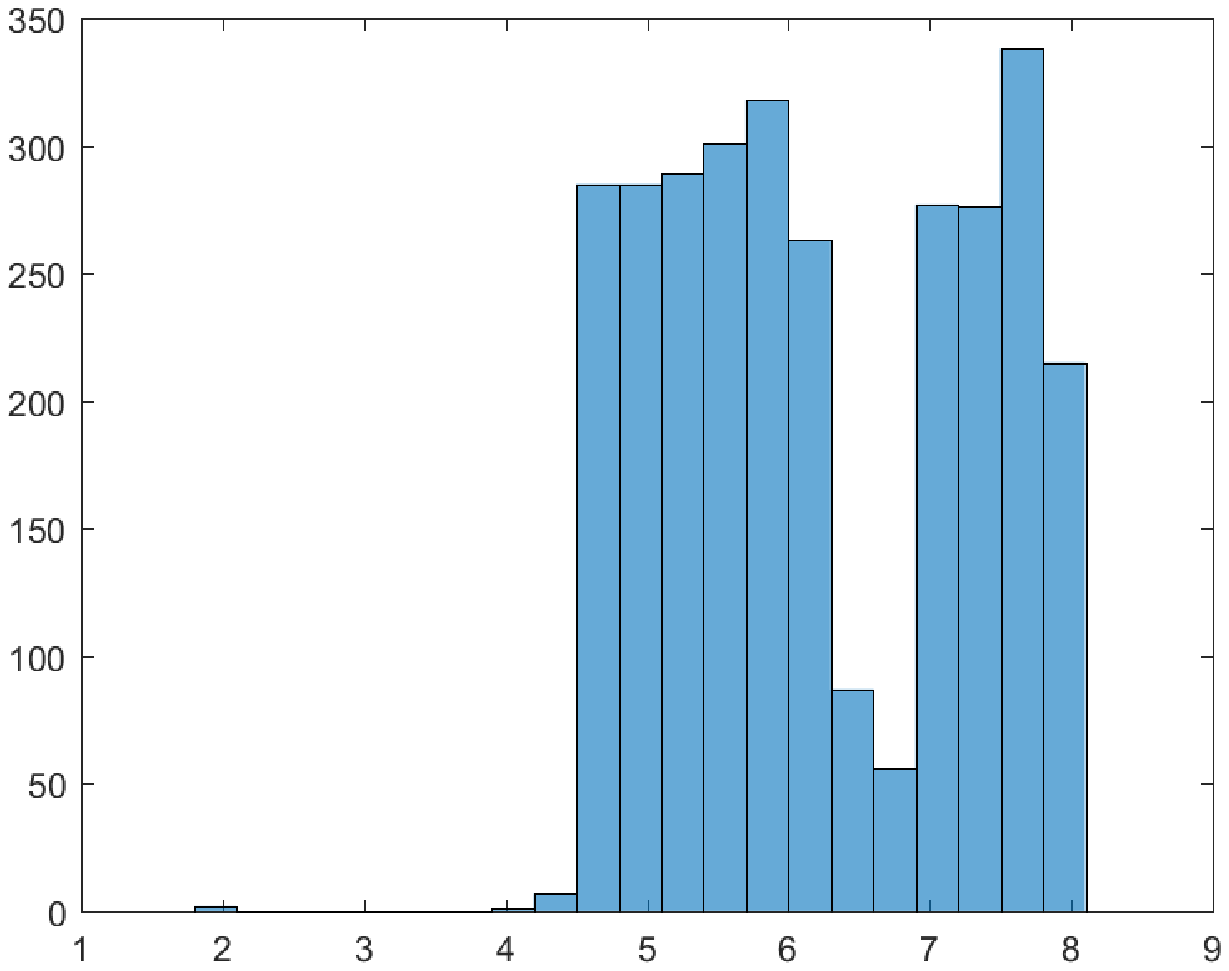}}
\caption{Left: Trace plots and histograms of Markov chains for three domains when $n\equiv 5$. Right: Probability histograms.}
\label{SE1n5}
\end{figure}

\begin{figure}[h!]
  \centering
  \subfigure[\textbf{L-shaped}]{
    \includegraphics[width=2.6in]{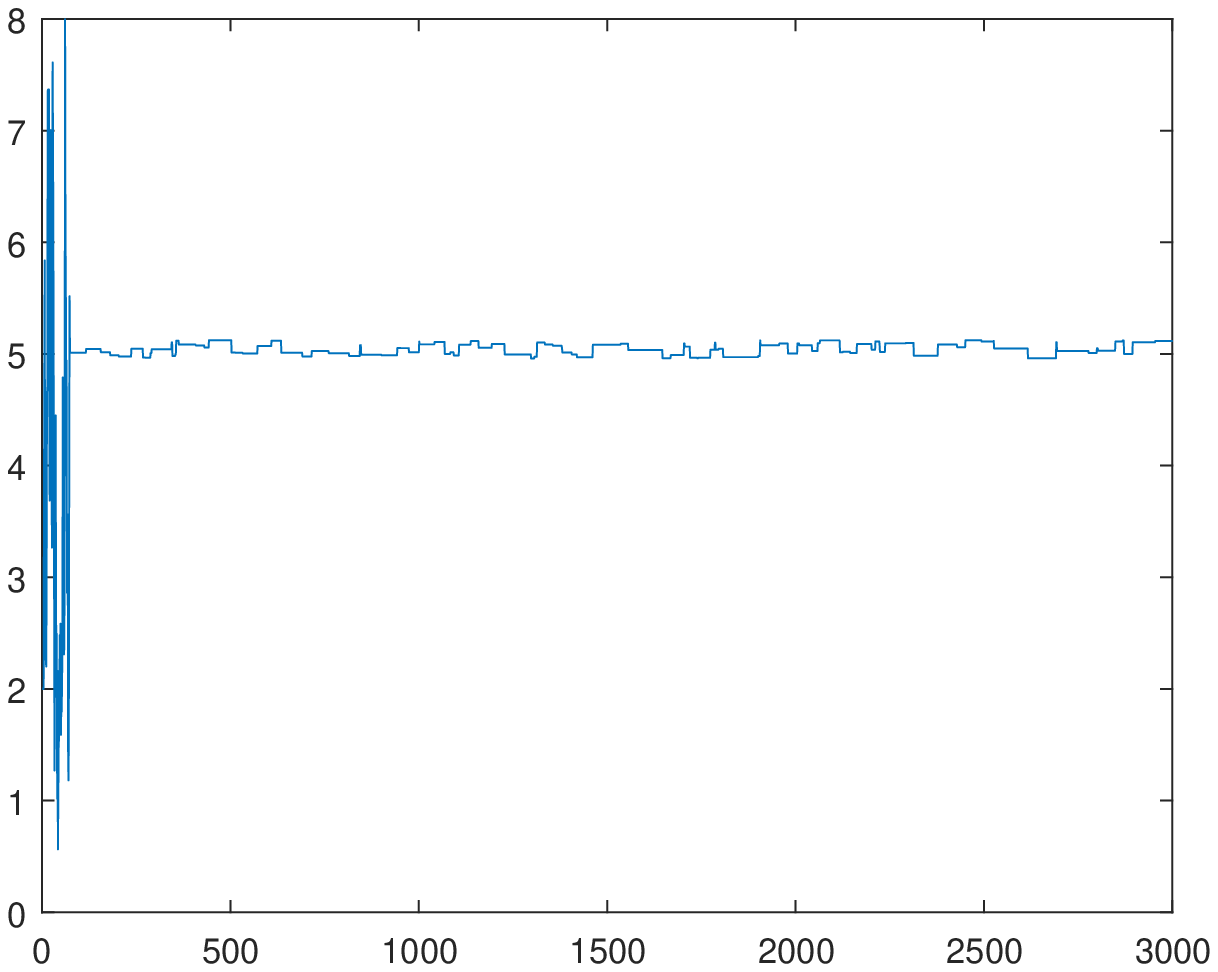}}
  \subfigure[\textbf{L-shaped}]{
    \includegraphics[width=2.8in]{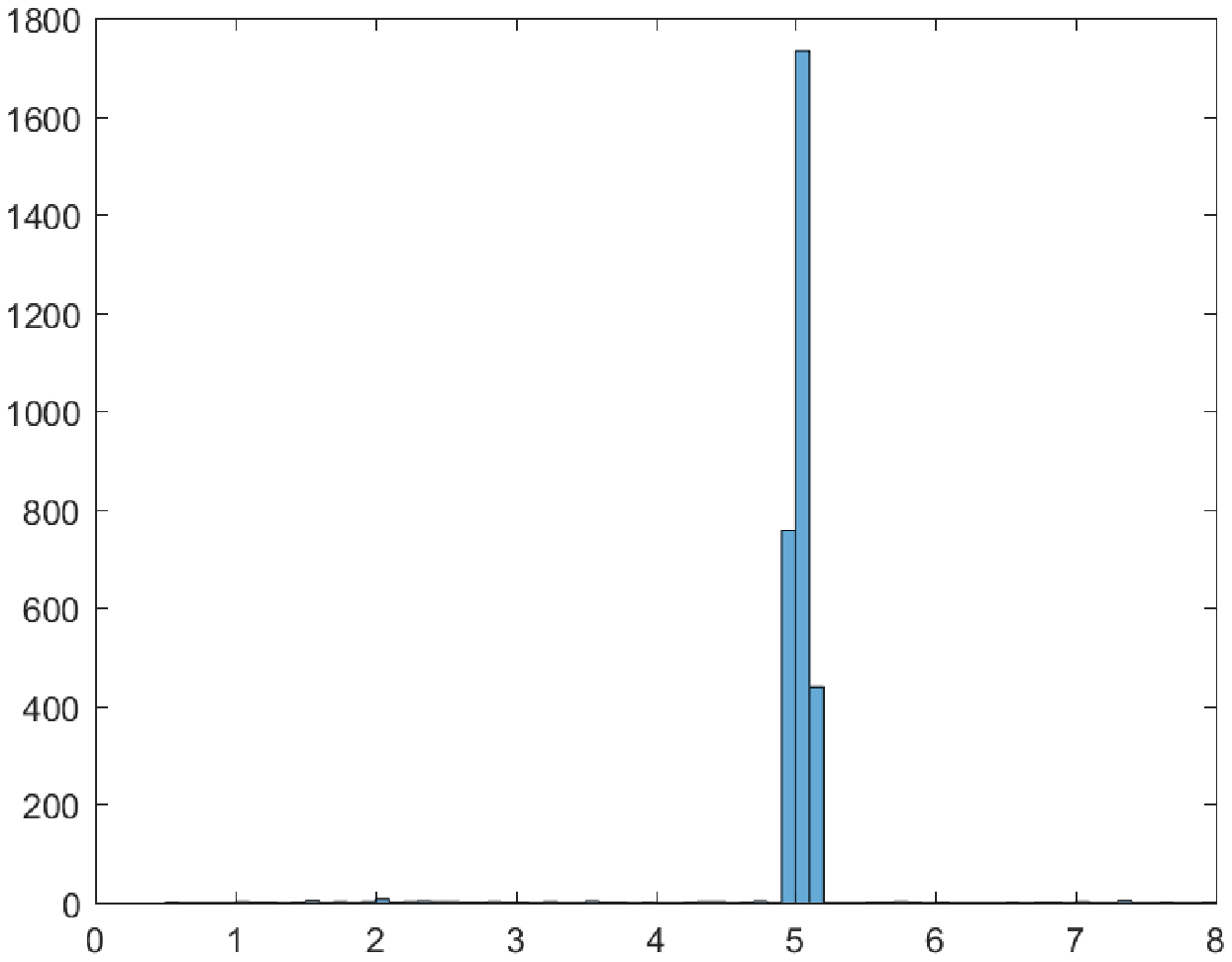}}
\caption{Left: Trace plot of the Markov Chain for the L-shaped using two eigenvalues ($n=5$). Right: Probability histogram.}
\label{SE2n5L}
\end{figure}

{\bf Example 5}  Real function index of refraction $n(x)=4+2|x|$.
Assume that two Stekloff eigenvalues are given and $n(x)$ is of the form $\beta^1+\beta^2|x|$.
We first obtain a constant approximation $n_{0}$ for $n(x)$ as the above example.
This provides some ideas of how to choose priors for $\beta^1$ and $\beta^2$.
For the second step, two Stekloff eigenvalues are used.
The posterior distribution is given by
\begin{equation} 
\pi(n|{\boldsymbol \lambda})\propto \exp \Big(-\frac{1}{2 \sigma^{2} }\|{\boldsymbol \lambda}-\mathcal{G}(n)\|^2\Big)\times I\{3<\beta^{1}<7\} \times I\{0<\beta^{2}<6\}.
\end{equation}

Two reconstructed Stekloff eigenvalues from \ref{n42x} are used for each domain: $2.10, -0.48$ for disc,  $0.42, -0.58$ for the square, and $1.00, -0.50$ for the
L-shaped domain.
\ref{table2} shows the reconstruction results.

\begin{table}[h!]
\centering
\begin{tabular}{c|c|c|c}
\hline
  $\sigma^2$ \qquad & domain\qquad &  \text{mean of} $n_{0}$ &  $\beta_{CM}^{1}+\beta_{CM}^{2}|x|$  \\\hline
\multicolumn{1}{c|}{\multirow{3}{*}{$0.05$}} &circle &  $4.9804$& $4.3916+1.9333|x|$\\
    \multicolumn{1}{c|}{} & square &   $6.7791$&   $3.8873+ 2.3409|x|$\\
  \multicolumn{1}{c|}{} & L-shaped &  $6.2197$ & $4.2953+1.7263|x|$ \\
  \hline
\end{tabular}
\caption{\label{table2} The posterior mean of $n(x)$ for three domains ($n(x)=4+2|x|$).}
\end{table}

{\bf Example 6} Complex index of refraction $n(x)=2+4i$. Assume that $\Re( n) \sim \mathcal{U}(0,8)$ and $\Im( n)\sim \mathcal{U}(0,8)$.
The same proposal distribution $q(n_{j},n_{j-1})$ are used to sample both $\Re( n)$ and $\Im( n)$.
We use Stekloff eigenvalues, $-0.02+0.26i$ for the circle, $-0.64+0.02i,0.12+0.56i$ for the square, and $-0.1+0.52i,0.04+0.22i$ for the L-shaped domain. 
\ref{table3} shows the reconstruction results and \ref{fn24i} shows the Markov chains.

\begin{table}[h!]
\centering
\begin{tabular}{c|c|c }
\hline
  $\sigma^2$ \qquad & domain\qquad & $n_{CM}$   \\\hline
\multicolumn{1}{c|}{\multirow{3}{*}{$0.05$}} &circle & $1.8511+3.9849i$
\\
    \multicolumn{1}{c|}{} & square & $2.2515+4.1935i$ \\
  \multicolumn{1}{c|}{} & L-shaped & $2.1204+4.1978i$\\
  \hline
\end{tabular}
\caption{\label{table3} The posterior means of $n(x)$ for three domains ($n(x)=2+4i$).}
\end{table}

\begin{figure}[h!]
  \centering
  \subfigure[\textbf{disc (real)}]{
    \includegraphics[width=2.5in]{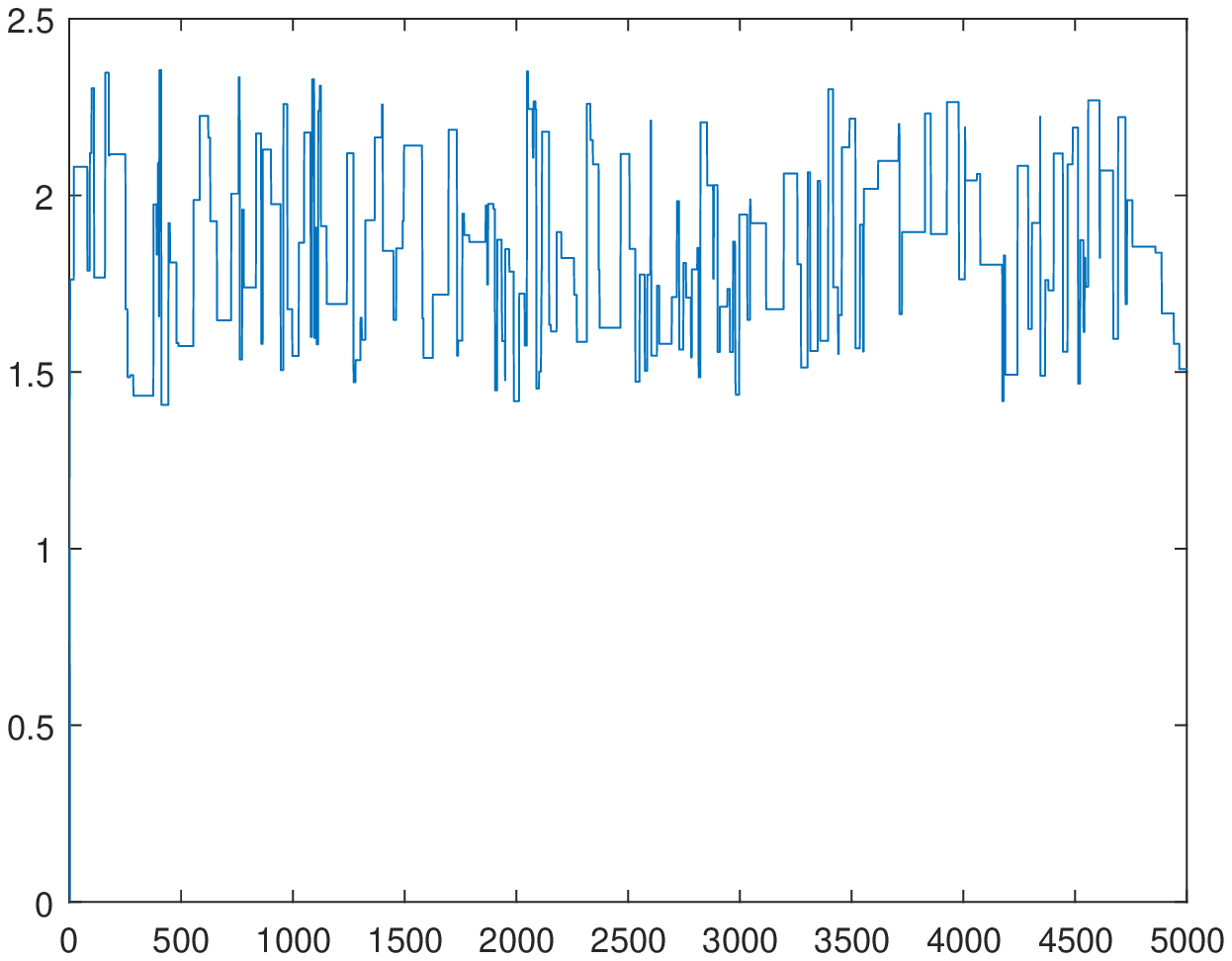}}
  \subfigure[\textbf{disc (imag)}]{
    \includegraphics[width=2.5in]{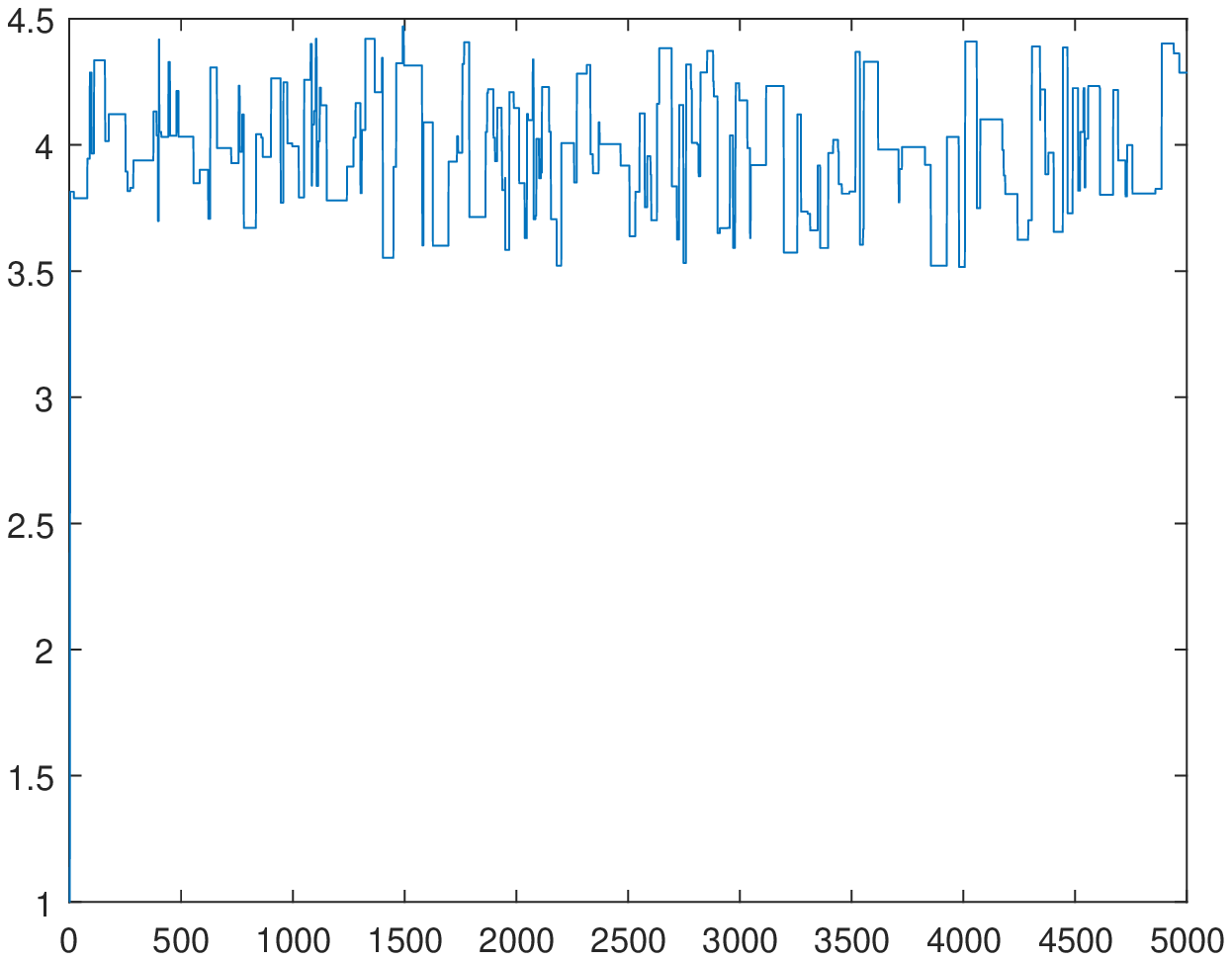}}
   \subfigure[\textbf{square (real)}]{
    \includegraphics[width=2.5in]{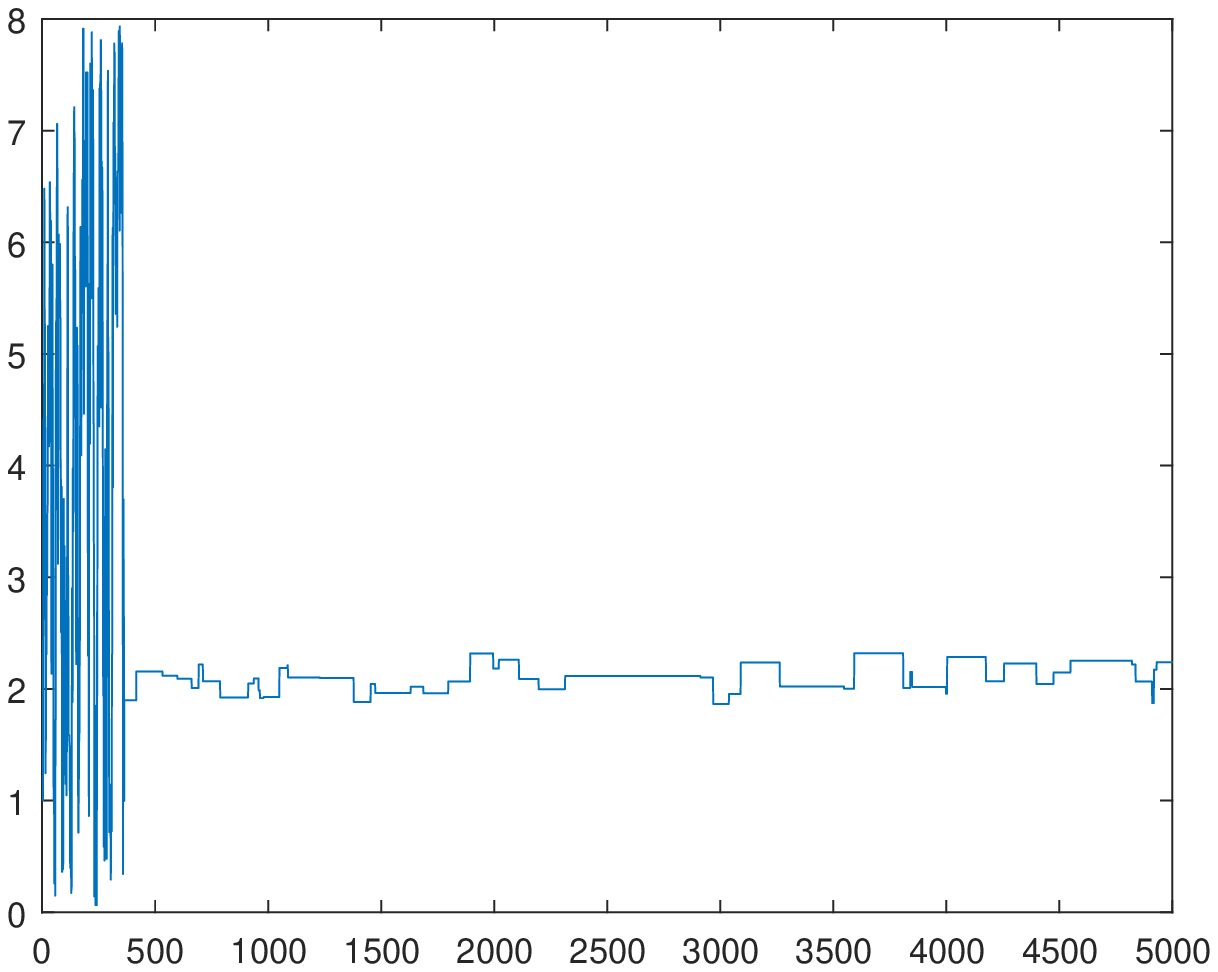}}
  \subfigure[\textbf{square (imag)}]{
    \includegraphics[width=2.5in]{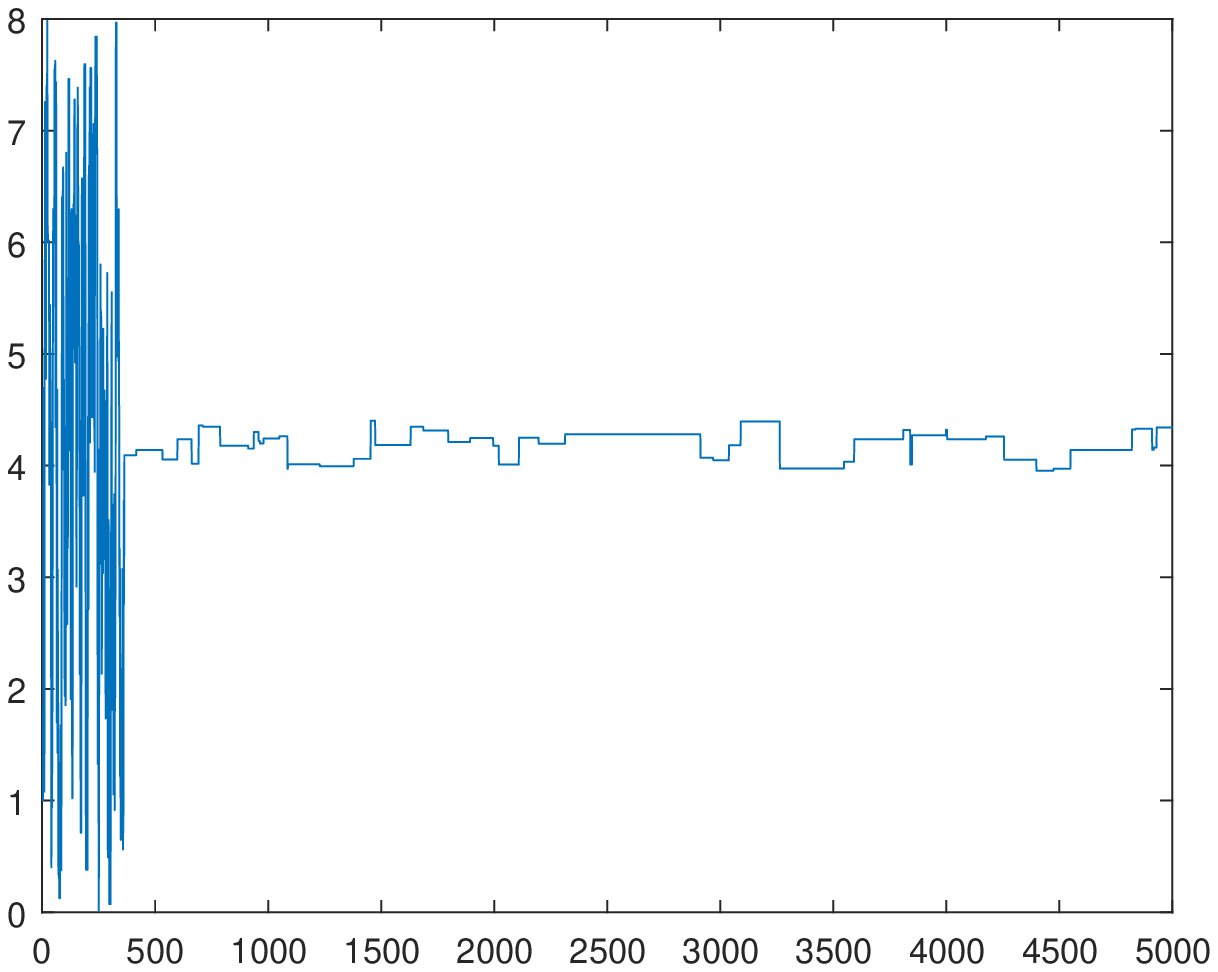}}
  \subfigure[\textbf{L-shaped (real)}]{
    \includegraphics[width=2.5in]{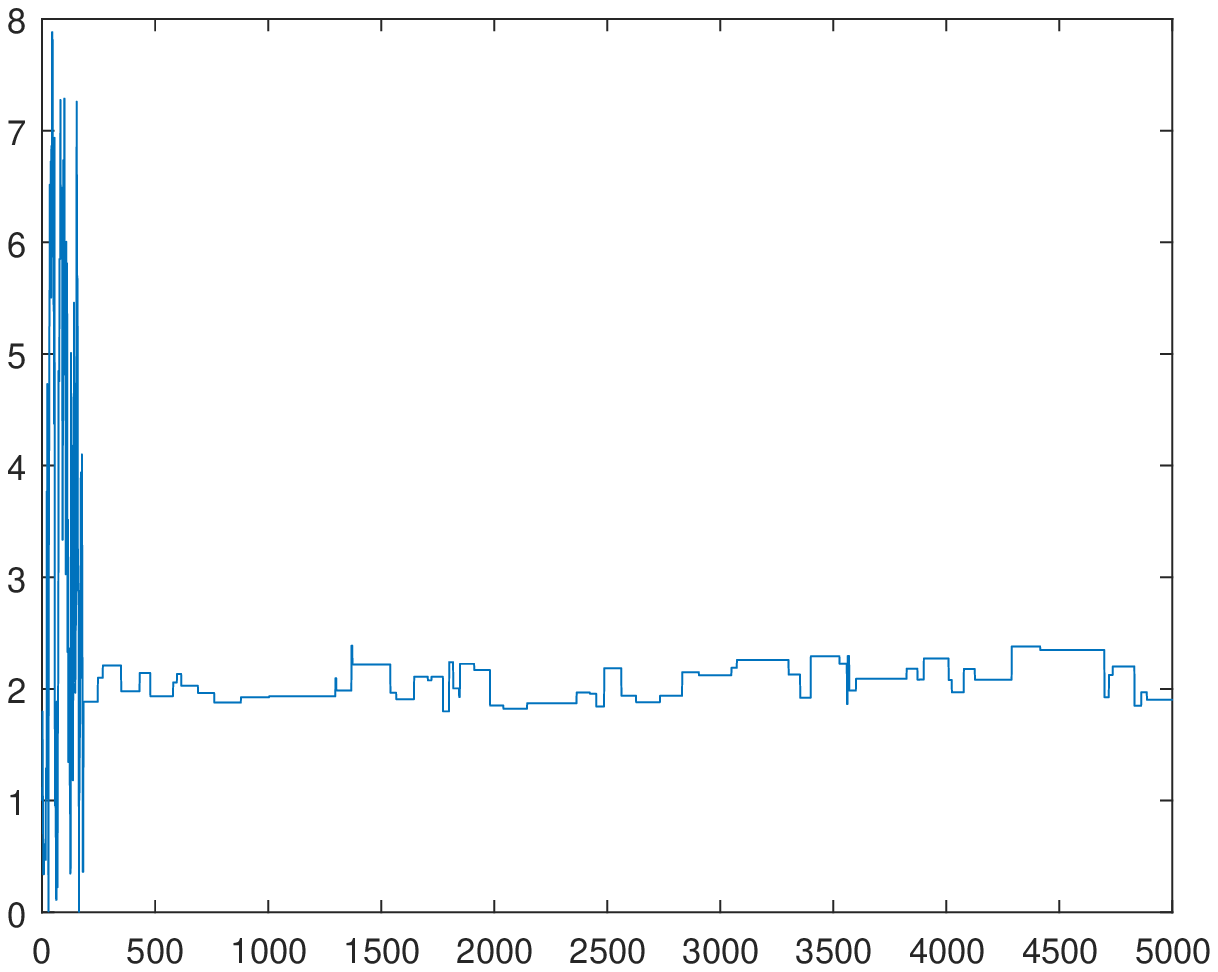}}
  \subfigure[\textbf{L-shaped (imag)}]{
    \includegraphics[width=2.5in]{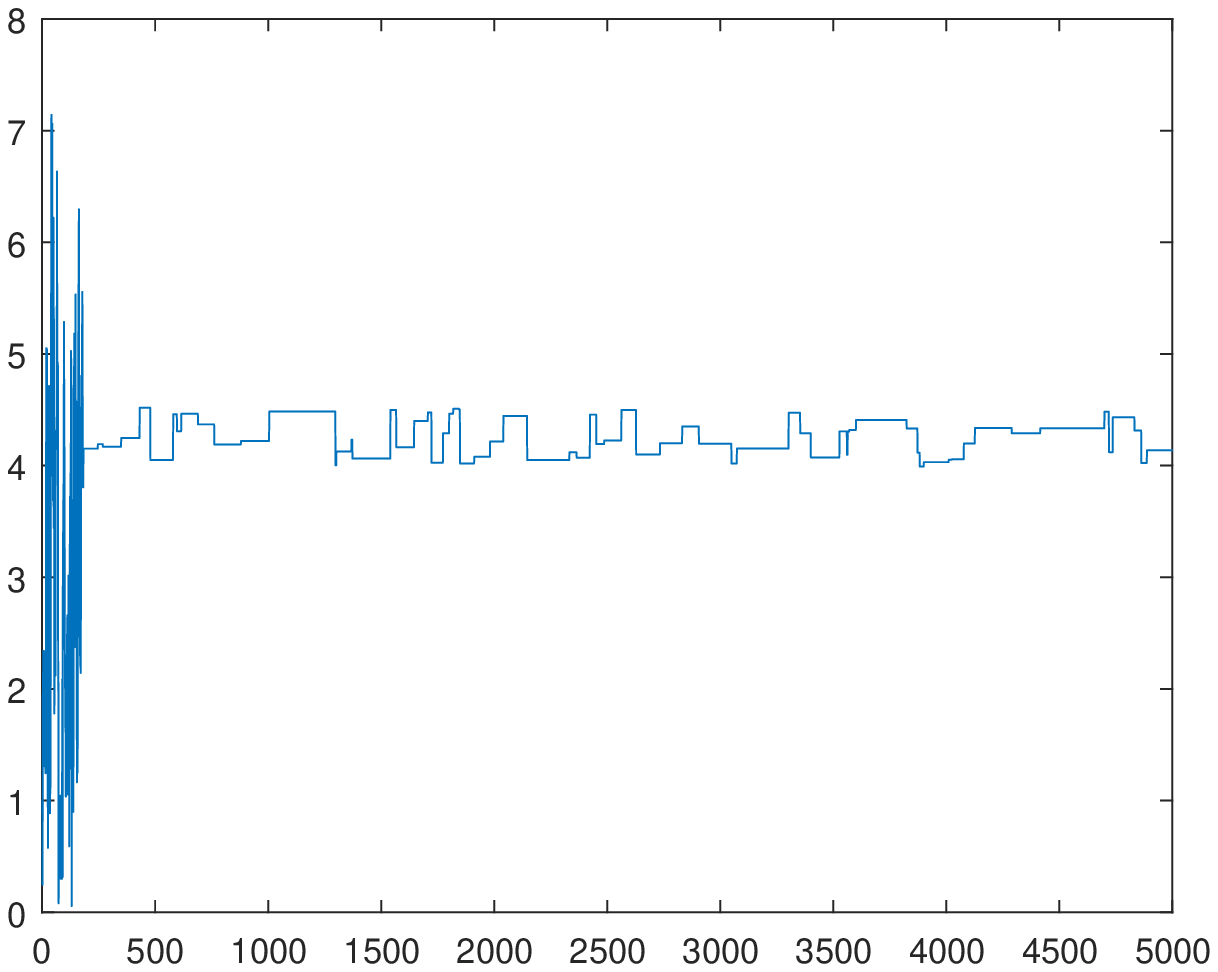}}
\caption{Trace plots of Markov chains for three domains ($n\equiv 2+4i$).}
\label{fn24i}
\end{figure}

\section{Conclusions}
In this paper, we show that the Cauchy data of a medium scattering problem can be used to reconstruct
the related Stekloff eigenvalues, which is demonstrated by numerical examples. A Bayesian approach is proposed
to estimate the index of refraction using the reconstructed Stkeloff eigenvalues. The method is particularly meaningful when there is a lack of
understanding of the relation between the known data and the unknown quantities. In future, the authors plan to extend the preliminary study here
to more challenging inverse scattering problems.

 \section*{Acknowledgement}{The research was supported in part by Guangdong Natural Science Foundation of China (2016A030313074) and NNSF of China under grants 11801218 and 11771068.}

\end{document}